\newcommand{\NP}{\mathrm{NP}\xspace}
\newcommand{\PT}{\mathrm{P}\xspace}
\newcommand{\Pp}{\mathcal{P}}
\newcommand{\Qq}{\mathcal{Q}}
\newcommand{\ra}{\rightarrow}
\newcommand{\WL}[1]{\mathrm{WL}(#1)}
\newcommand{\tup}[1]{\mathbf{#1}}
\newcommand{\kequiv}{\equiv^k}
\newcommand{\txor}{\textsc{3-xor}\xspace}
\newcommand{\tsat}{\textsc{3sat}\xspace}
\newcommand{\FF}{{\mathbb F}} 
\newcommand{\defeq}{:=}
\newcommand{\xor}{\oplus}
\newtheorem{lemma}{Lemma}
\newtheorem{theorem}{Theorem}
\begin{document}

\title{\textbf{Constructing Hard Examples for Graph
    Isomorphism}\thanks{The research reported here was carried out
    while the second author was a student at the University of
    Cambridge during the academic year 2016--17.  The research of the first author is supported by the Alan Turing Institute under the EPSRC grant EP/N510129/1.}}

\author{Anuj Dawar}
\author{Kashif Khan}
\affil{Department of Computer Science
  and Technology\\ University of Cambridge}
\date{}

\maketitle

\begin{abstract}
  We describe a method for generating graphs that provide difficult
  examples for practical Graph Isomorphism testers.  We first give the
  theoretical construction, showing that we can have a family of
  graphs without any non-trivial automorphisms which also have high
  Weisfeiler-Leman dimension.  The construction is based on
  properties of random 3XOR-formulas.  We describe how to convert such
  a formula into a graph which has the desired properties with high
  probability.  We validate the method by experimental
  implementations.  We construct random formulas and validate them
  with a SAT solver to filter through suitable ones, and then convert
  them into graphs.  Experimental results demonstrate that the
  resulting graphs do provide hard examples that match the hardest
  known benchmarks for graph isomorphism.
\end{abstract}

\section{Introduction}
\emph{Graph Isomorphism} (GI) is the problem of deciding, given two graphs
$G$ and $H$ whether there is a bijection between their sets of vertices
$V(G)$ and $V(H)$ respectively, that takes edges of $G$ to edges of
$H$ and non-edges of $G$ to non-edges in $H$.  In short, it asks if
$G$ and $H$ are the same, up to a renaming of vertices.  The problem
is of great interest in the field of complexity theory as it is among
the few natural problems in $\NP$ that are not known to be in $\PT$
nor known to be $\NP$-complete.  Babai's recent result~\cite{Bab16}
places the problem in \emph{quasi-polynomial} time, further cementing
its status as a candidate \emph{$\NP$-intermediate} problem.

While the complexity of GI is interesting from the theoretical
standpoint, in practice the problem is largely solved.  That is, there
exist programs which efficiently deal with instances of graph
isomorphism that arise in practice, for instance in searching through
chemical databases or in image processing.  Significant among these
effective graph isomorphism testers are the programs \texttt{Traces} and
 \texttt{nauty}, available in a common distribution (see~\cite{McKayPiperno}).  It remains a challenge for the
theoretician to examine the algorithms behind these programs and
determine their worst-case behaviour.  Indeed, in the long version of
his paper, Babai asks the question~\cite[Sec.~13.5]{Bab-arxiv}: 
\begin{quote}
   The question is, does there exist an infinite family of pairs of
 graphs on which these heuristic algorithms fail to perform
 efficiently? The search for such pairs might turn up interesting
 families of graphs.
\end{quote}
We address this question and provide a means of constructing just such
a family of graphs, including an implementation and experimental results.

Computationally, the graph isomorphism problem is equivalent to the problem of
determining the orbits of the automorphism group of a graph $G$.  That
is, given a graph $G$, we wish to partition $V(G)$ into the minimum
number of classes so that for any pair of vertices $u,v$ in the same
class there is an automorphism of a graph $G$ that takes $u$ to $v$.
We call this the \emph{orbit partition} of the graph $G$.  The
fundamental algorithm underlying \texttt{nauty} as well as
\texttt{Traces}, like many practical
approaches to the graph isomorphism problem, relies on steadily
refining a partition to arrive at the partition into orbits.  It does
this through a process of \emph{(i)} \emph{vertex refinement} combined
with \emph{(ii)} 
\emph{individualization} and \emph{(iii)} factoring of automorphisms of the graph.
The process of vertex refinement, also known as colour refinement, is
a highly efficient method that is known to achieve the orbit partition
on \emph{almost all} graphs~\cite{BES} but fails miserably on regular
graphs, for example.  Where vertex refinement fails,
the programs use individualization, which is the process of
selecting (i.e.\ individualizing) a particular vertex and placing it
in its own class and then repeating vertex refinement until
evenutually a partition into singleton sets is obtained.  With
backtracking, the structure of the parition into orbits is revealed.  The main
differences between the various graph isomorphism solvers (not only
\texttt{nauty} and \texttt{Traces} but also \texttt{bliss} and
\texttt{conauto}) are precisely in how the vertices are selected.
This process is inherently exponential in the number of selections that need to be
made.  However, the search space is radically cut down if we can
identify non-trivial automorphisms of the graph and factor the graph
suitably, which \texttt{Traces}, in particular, does effectively.  More details
are given in Section~\ref{sec:background} below.

In theory, the vertex refinement algorithm is subsumed by the
$k$-dimensional Weisfeiler-Leman ($k$-WL) method, which works by refining a
partition of the $k$-tuples of the vertices of a graph $G$
(see~\cite[Sec.~2]{CFI92} for a good account of the history of this
method).  Taking $k$ to be large enough (of the order of the number of
vertices in $G$), the $k$-WL method gives exactly the orbit partition,
but this is (for $k \geq 2$) not a practical method and rarely used in
solvers.  However, what is interesting from the point of view of
worst-case analysis is that the $k$-WL method serves as a way of
bounding the number of individualizations we need to determine the
orbit partition in a graph $G$.  To be precise, suppose $G$ has $k$
vertices $v_1,\ldots,v_k$ such that when each of them is
individualized, the vertex refinement procedure converges to the discrete
partition of $G$, then we can also determine the orbit partition by
the $(k+2)$-WL method.  Since it is known, through the construction of
Cai, F\"urer and Immerman~\cite{CFI92} (called CFI graphs below), that there are, for every $k$,
graphs on which the $k$-WL method fails to give the orbit partition,
it follows that there is no constant bound on the number of
individualizations needed, in combination with vertex refinement, to
obtain the orbit partition of a graph.  Hence, there is no polynomial
bound on the running time of a graph isomorphism algorithm based solely on vertex
refinement and individualization.

However, \texttt{Traces} has another element in its armoury, and
that is that it detects automorphisms while constructing the orbit
partition, using these to factor the graph and therefore cut down the
search space.  This means that the running time is not exponential in
the number of individualizations but is potentially divided by the size of the automorphism group of the graph.  Indeed, the CFI graphs are among the standard benchmarks considered
in~\cite{McKayPiperno} and they prove to be not too difficult for the
program to handle as they have many automorphisms.
This led the first author of the
present paper to suggest (at the December 2015 Dagstuhl seminar on
Graph Isomorphism) that the way to construct hard families of graphs
and answer Babai's quetsion,
is to find graphs whose Weisfeiler-Leman dimension is large but which
have no non-trivial automorphisms.  A construction of structures
satisfying this property, known as \emph{multipedes} is given in the
work of Gurevich and Shelah~\cite{GS96}.  These structures can be
easily turned into graphs to yield the desired family.  However, while
the theoretical construction guarantees the existence of such graphs,
it turns out that constructing actual instances, even for small values
of $k$, leads to very large graphs.  Thus, in order to construct
families that can be used for practical benchmarking of GI solvers, a
refined analysis is required.  One such approach was taken by Neuen
and Schweitzer in~\cite{NeuenSchweitzer} where the multipede construction was
combined with \emph{size reduction} techniques.

In the present paper, we give an alternative construction of such
graphs which proves very effective.  Instead of the multipedes of
Gurevich and Shelah, we start with random systems of equations over
the 2-element field.  This is based on the insight from~\cite{ABD09}
that the construction of CFI graphs really codes such systems in the
graph construction.  We use a way of lifting these systems to graphs
which have the property that as long as the original system has no
non-trivial solutions, the resulting graph has no non-trivial
automorphisms.  Moreover, as long as the original system is
$k$-locally satisfiable (a precise definition will follow), the orbit
partition of the graph cannot be obtained by the $l$-WL method for some large $l$.  It
turns out that a random system has both properties: of having no
non-trivial solutions and being $k$-locally satisfiable for sublinear
values of $k$.  The theoretical basis of this construction is given in
Section~\ref{sec:construction}.  The main conclusion is
Theorem~\ref{thm:asymmetric} which shows that we can construct
families of graphs which are \emph{asymmetric}, i.e.\ have no
non-trivial automorphisms, but have \emph{linear} Weisfeiler-Leman
dimension.  Moreover, we define a randomized construction that
produces such graphs with high probability.

It is instructive to compare our theoretical result with that of Neuen
and Schweitzer~\cite{NeuenS-ESA,NeuenSchweitzer}.  They also give a
construction of families of graphs which provably require exponential
time on a solver based only on individualization and refinement.
Here, they take refinement to be any class of procedures that respect
$k$-WL equivalences for some fixed $k$.  Their construction is based
on the multipede construction of Gurevich and Shelah, converted into
graphs.  This construction guarantees that the graphs are asymmetric,
and have unbounded Weisfeiler-Leman dimension.  While the dimension is
unbounded, the multipede construction does not yield \emph{linear}
dimension.  Indeed, as noted above, the graphs obtained for small
values of $k$ are rather large.   Therefore, in order to obtain exponential lower bounds
Neuen and Schweitzer employ size reduction techniques and explicitly consider the
shape of the individualization search tree.  By contrast, we
directly establish a linear lower bound on the WL-dimension of the
graphs and this immediately leads to an exponential lower bound on the
search tree size under any target cell selection strategy.  The
experimental results reported in~\cite{NeuenS-ESA} are comparable with
the ones we obtain.

To implement our theoretical construction, we leverage a highly
developed SAT solver.  This enables us to search for systems of
equations which have no non-trivial solutions by coding them as 3SAT
instances.  While we do not directly check for $k$-local
satisfiability, we use a proxy which is checking the speed improvement
in the SAT solver that is obtained by the use of Gaussian elimination
methods.  This filter allows us to select those systems which are most
likely to be locally satisfiable.  We present details of the method in
Section~\ref{sec:setup}.  Finally, we present some experimental
results in Section~\ref{sec:results} which show that the method does,
indeed, yield instances which are hard, especially for
\texttt{Traces}, but also for other isomorphism solvers.  We
have created some benchmark sets of such graphs, and one of these is
now available through the webpage for \texttt{nauty} and
\texttt{Traces}: \url{http://pallini.di.uniroma1.it/Graphs.html}.  However, we consider the
main contribution of the present work to be the method, using a SAT
solver, which gives the ability to generate such large and hard
example graphs at will. 

The experimental work reported here was carried out in three
stages.  The first set of experiments were performed in April-June
2017 as part of the
second author's Master's project.  The full code of the
implementation, all data generated in the experiments, as well as a
project write-up can be found here:
\url{https://github.com/kkcam/graph-ismorphism}.  In particular, a
number of graphs in \texttt{dreadnaut} format (\texttt{.dre}) can be
downloaded from the site to run directly with \texttt{nauty/Traces}.
A second set of experiments on the same graphs, involving a wider range of solvers, was
carried out in January-February 2019, in response to a request from a
referee.  The results are available from the same website.  Finally,
another set of graphs using this protocol was created by Yui Chi
Richie Yeung.  Those graphs and the results are available at
\url{https://github.com/y-richie-y/sat_cfi/}.  A selection of results from all
three sets of experiments is given in Section~\ref{sec:results}.

The theoretical work
detailed in Section~3 is based on the first author's project
suggestion, given as an appendix in the Master's thesis available at
\url{https://github.com/kkcam/graph-ismorphism}.

\paragraph{Acknowledgements}  We would like to thank Richie Yeung for
allowing us to report some of his experimental results in this paper.
We would also like to thank Jakub Rydval for helping to correct an
error in an earlier version of Lemma~\ref{lem:unique}.

\section{Background}\label{sec:background}

\paragraph{Partitions and Isomorphisms.}
Given a set $S$ and two partitions $\Pp = \{P_1,\ldots,P_s\}$ and $\Qq
= \{Q_1,\ldots,Q_t\}$ of $S$, we say that $\Pp$ is a \emph{refinement} of
$\Qq$ (or equivalently, that $\Qq$ is \emph{coarser} than $\Pp$) if
for every $P \in \Pp$ there is a $Q \in \Qq$ such that $P \subseteq
Q$.  It is a \emph{proper refinement} if $s > t$.  We say that a
partition $\Pp$ is \emph{discrete} if every part is a singleton.

We always consider undirected, loopless, simple graphs.  That is, a
graph $G$ is a set of vertices $V(G)$ along with a set of edges $E(G)$
where each edge $e \in E(G)$ is a two-element set $e = \{u,v\}
\subseteq V(G)$, with $u \neq v$.  Where $G$ is clear from context, we
simply write $V$ and $E$ for the vertex and edge set respectively.
For a set $C$, a $C$-coloured graph is a a graph $G$ together with a
function $\chi: V \ra C$.  For a $C$-coloured graph $(G,\chi)$, we refer to the
partition of $V$ given by $\{ \{v \mid \chi(v) = c \} \mid c \in C\}$ as
the $\chi$-partition of $V$.

Given two $C$-coloured graphs $(G,\chi)$ and $(H,\delta)$, an
\emph{isomorphism} from the first to the second is a bijection $\iota : V(G)
\ra V(H)$ such that for each $u,v \in E(G)$, $\{u,v\} \in E(G)$ if,
and only if, $\{\iota(u), \iota(v)\} \in E(H)$ and $\chi(v) =
\delta(\iota(v))$.  An \emph{automorphism} of $(G,\chi)$ is an
isomorphism from   $(G,\chi)$ to itself.  The \emph{orbit partition}
of $(G,\chi)$ is the coarsest partition of $V(G)$ such that if $u$ and $v$
are in the same part of the partition, there is an automorphism
$\iota$ of $(G,\chi)$ such that $\iota(u) = v$.  Consider the three
computational problems: (1) given a pair of graphs $G$ and $H$, decide
if there is an isomorphism from $G$ to $H$; (2) given a pair of
coloured graphs $(G,\chi)$ and $(H,\delta)$ decide if there is an
isomorphism between them; and (3) given a coloured graph $(G,\chi)$,
output its orbit partition.  It is known that these three problems are
polynomially-equivalent, which is to say that there are
polynomial-time reductions between any pair of them (see, for
instance~\cite{Toran04}).  As such, we treat them as equivalent and
mostly concentrate on the third.

A graph $G$, or a coloured graph $(G,\chi)$, is called
\emph{asymmetric} if its only automorphism is the identity map.  Some
authors call such graphs \emph{rigid}, but we employ the terminology
of Babai~\cite{Babai-handbook} who reserves the latter term for graphs
without non-trivial endomorphisms.

\paragraph{Refinement and Individualization.}
The \emph{vertex refinement} procedure is an algorithm that produces,
given a $C$-coloured graph $(G,\chi)$, the coarsest partition $\Pp$
of $V$ refining the $\chi$-partition such that if $u,v \in P \in \Pp$, then for
every $Q \in \Pp$, $u$ and $v$ have the same number of neighbours in
$Q$.  Note that the partition of $V$ produced by the vertex refinement
procedure is coarser than the orbit partition.  For many graphs $G$
it is, in fact, the orbit partition but, for example for regular
graphs with no colouring, it can be properly coarser
(see~\cite{BES}). 

Given a $C$-coloured graph $(G,\chi)$ and a vertex $v \in V$, let $c$
be a new colour that does not appear in $C$ and let $\chi':V
\rightarrow C \cup \{c\}$ be defined
as the colouring with  $\chi'(v)  = c$ and $\chi'(u) = \chi(u)$ for $u
\neq v$.  Then, the partition of $V$ obtained by vertex refinement
from $(G,\chi')$ is called the vertex refinement of  $(G,\chi)$ with
\emph{individualization} of $v$.  More generally, given a sequence
$I = (v_1,\ldots,v_k)$ of vertices in $V$, the vertex refinement with
individualizations of $I$ is the algorithm that produces the vertex
refinement of $(G,\chi')$ where $\chi'$ is a
$C \cup\{c_1,\ldots,c_k\}$-colouring of $V(G)$ with
$\chi'(v_i) = c_i$ and $\chi'(v) = \chi(v)$ for $v \not\in I$; and
$c_1,\ldots,c_k \not\in C$.  This
partition is \emph{not} in general coarser than the orbit partition.
The aim of the individualization-refinement procedure is to find the
smallest (in a precise sense)  $I$ such that the refinement with
individualizations of $I$ yields a partition into singleton sets.  From this, it
is possible to produce the orbit partition of $(G,\chi)$.  For
details, we refer the reader to~\cite[Sec.~2]{McKayPiperno}.

\paragraph{High-dimensional Weisfeiler-Leman.}
For each integer $k \geq 2$, the $k$-dimensional Weisfeiler-Leman
algorithm gives a partition of $V^k$ that is the coarsest partition
$\Pp$ satisfying the following stability condition: if $\tup{u},\tup{v} \in V^k$
are tuples in the same part of $\Pp$ and $(P_1,\ldots,P_k)$ is a
$k$-tuple of parts of $\Pp$, then the order-preserving map from
$\tup{u}$ to $\tup{v}$ is an isomorphism of the induced subgraphs of
$G$ and $|\{ x \mid \tup{u}[x/i] \in P_i \text{ for }1\leq  i \leq k\}| = |\{
x \mid \tup{v}[x/i] \in P_i  \text{ for }1\leq  i \leq k\}|$.  Here, $\tup{u}[x/i]$ denotes the
tuple obtained by substituting $x$ for the $i$th element of $\tup{u}$
and $|S|$ denotes the cardinality of a set $S$.  We write $\kequiv$ to
denote the equivalence relation corresponding to this partition.
Again, this partition is necessarily coarser than the partition of
$V^k$ into orbits under the action of the automorphism group of $G$,
since the orbit partition clearly satisfies the stability condition.
Also, for sufficiently large values of $k$, in particular certainly
for $k \geq n-1$, the partition given by $\kequiv$ \emph{is} the orbit
partition.  So, for any graph $G$, we define the
\emph{Weisfeiler-Leman dimension} of $G$, denoted $\WL{G}$, to be the
least $k$ such that the partition induced by $\kequiv$ on $V^k$
coincides with the orbit partition.

Cai, F\"urer and Immerman~\cite{CFI92} showed that there is no fixed $k$
such that $\WL{G} < k$ for all graphs $G$.  Indeed, they show a linear
lower bound on $\WL{G}$.  To be precise, they construct for each $k$ a pair $G$ and $H$ of
non-isomorphic graphs with $O(k)$ vertices such that $G$ and $H$
cannot be distinguished by the isomorphism test based on
$k$-dimensional Weisfeiler-Leman equivalence.  This implies that
$\WL{G\uplus H} > k$, where $G\uplus H$ is the disjoint union of $G$
and $H$.  Through this operation of disjoint union, the problem of
testing graph isomorphism reduces to that of constructing the orbits
of the automorphism group and, as stated in the introduction, here we
adopt the latter perspective.

The $k$-dimensional Weisfeiler-Leman isomorphism test has been
extensively analyzed in theoretical studies of the graph isomorphism
problem.  It has many equivalent and strikingly different
characterizations, arising in algebra, combinatorics, logic and
optimization.  In particular, it is known that in a graph $G$,
$\tup{u} \kequiv \tup{v}$ if, and only if, there is no formula of
$C^{k+1}$ (first-order logic with counting quantifiers using at most
$k+1$ distinct variables) that distinguishes $\tup{u}$ from $\tup{v}$.
This relation was given a useful characterization in terms of a $k$-pebble
\emph{bijective game} by Hella~\cite{Hel92}.  The game is played by
two players called Spoiler and Duplicator on a graph $G$.  The
position of the game at any moment in time consists of two $k$-tuples
of vertices $\tup{u}$ and $\tup{v}$.  In each move, Spoiler chooses a
value of $i\in [k]$ and Duplicator responds with a bijection
$\pi: V \ra V$.  Spoiler then chooses a vertex $x \in V$ and the new
position is $\tup{u}[x/i]$ and $\tup{v}[\pi(x)/i]$.  Spoiler wins in
any position if the \emph{ordered} subgraph of $G$ induced by the
tuple $\tup{u}$ is not isomorphic to the \emph{ordered} subgraph
induced by $\tup{v}$.  The result of Hella~\cite{Hel92} essentially
tells us that Duplicator has a strategy to play forever without
Spoiler winning in the $(k+1)$-pebble bijective game starting at
position $\tup{u},\tup{v}$ if, and only if, $\tup{u} \kequiv \tup{v}$.

The connection between the Weisfeiler-Leman dimension of a graph and
the refinement and individualization procedure is the following.  If a
graph $G$ contains $k$ vertices $v_1,\ldots,v_k$ such that
individualizing them results in the vertex refinement procedure on $G$
producing the discrete partition, then $\WL{G} \leq k+2$.  This is
most easily seen through the characterization of the Weisfeiler-Leman
dimension in terms of counting logic.  That is, when such a set
of $k$ vertices exist, we can show that any pair $x$ and $y$ of
vertices that are not distinguished by any formula of $C^{k+3}$ are in
the same orbit of the automorphism group of $G$.  The argument is as
follows.  We know that the vertex refinement procedure yields a
partition into $C^2$-equivalence classes.  By the assumption that individualizing $v_1,\ldots,v_k$
results in the discrete partition, we have that for each vertex $x$,
there is a formula $\phi(x)$ of $C^2$ using constants for
$v_1,\ldots,v_k$ that is only true of $x$ and of no other vertex in
$G$.  Write $\phi_x$ for the formula of $C^{k+2}$ where the constants
in $\phi$ have been replaced with new variables (which we will also
denote $v_1,\ldots,v_k$ for convenience).  Now consider the formula 
$$ \theta( v_1,\ldots,v_k) \defeq \bigwedge_{x \in V(G)}
\exists^{=1}x \phi_x \land \forall x,y \big( E(x,y) \leftrightarrow
\bigvee_{\{x,y\} \in E(G)} (\phi_x \land \phi_y)\big) .$$
Note, here, while we have a different formula $\phi_x$ for each $x \in
V(G)$, we assume that we re-use the variable $x$.  However, where we
write $\phi_y$, we replace it with the new variable $y$.  This ensures
that the total number of variables used is at most $k+3$.  It can then
be verified that the formula $\exists v_1,\ldots,v_k \theta$ describes
the graph $G$ uniquely, up to isomorphism.  Moreover, $\exists
v_1,\ldots,v_k (\theta \land \phi_x)$ is a formula of $C^{k+3}$ that
is only satisfied in $G$ by vertices in the orbit of $x$.

\paragraph{Multipedes.}
Gurevich and Shelah~\cite{GS96} show how to construct a class of
finite structures which is (1) axiomatized by a sentence of
first-order logic; (2) contains only structures with no non-trivial
automorphisms; and (3) such that no formula of fixed-point logic defines a
linear order on all structures in this class.  From our point of view,
the relevant aspect of this construction is that it gives, for each
value of $k$, an \emph{asymmetric} structure $M_k$, that is one which has no
non-trivial automorphisms, but on which the partition into
$\kequiv$-classes is non-trivial.  That is to say, even though every
element of the orbit partition of $M_k$ is a singleton, there are
pairs of distinct elements $a,b$ in $M_k$ such that $a \kequiv b$.
The structures in question are called $3$-multipedes in~\cite{GS96}.

It seems at first sight that this provides suitable hard examples for
graph isomorphism testers such as \texttt{Traces}.  Strictly speaking, the
$3$-multipedes are not graphs, but they can be translated to graphs by
standard methods (see~\cite[Theorem~5.5.1]{Hodges}), preserving the relevant
properties: unbounded Weisfeiler-Leman dimension and no non-trivial
automorphisms.  The absence of non-trivial automorphisms means that
factoring by automorphisms cannot be used to speed up search by
trimming the tree, while the unbounded dimension guarantees that there
is no upper bound on the number of individualization steps needed to
make the vertex refinement procedure yield the discrete partition.
However, actually constructing instances of such multipedes turns out
to be difficult.  The proof in~\cite{GS96} does not actually show how
to construct the structures $M_k$.  Rather, it shows that for all
large enough values of $n$, a random structure on $n$ elements, under
a suitable skew probability distribution $\mu_n$, has the right properties.
However, the probability grows rather slowly with $n$.  Indeed, the
smallest value of $n$ at which the probability is non-zero is possibly
exponential in $k$.  An experimental attempt to sample from the distribution
$\mu_n$ failed to produce interesting examples at values of $n$ up to
a few thousand~\cite{Arrighi-git}.

In the present paper, we consider an alternative construction, based
on similar principles, of graphs whose Weisfeiler-Leman dimension is
linear in the number of vertices (as with the CFI
graphs), and which have no non-trivial automorphisms.  The
construction is again randomized, but based on a simple and
well-understood probability distribution.  Furthermore, a use of a SAT
solver enables the quick generation of examples by filtering graphs
that are sampled from the distribution.

\section{The Construction}\label{sec:construction}
In this section, we describe a construction that yields, for each $k$,
a graph $G_k$ with $O(k)$ vertices with the property that $G_k$ is
asymmetric and has Weisfeiler-Leman dimension at least $k$.   The
proof that such graphs exist is derived from known results in the literature, and here
we show how to derive it, giving the necessary definitions to
understand the background.  The starting point of the construction is
the observation that we can define instances of \txor that
are $k$-locally consistent but unsatisfiable.

\paragraph{XOR formulas.}
Fix a countable set $\mathcal{X}$ of Boolean variables.  We use
capital letters $X, Y, \ldots$ to range over this set.  A
$\txor$-formula is a finite set of clauses, where each clause contains
exactly $3$ literals, each of which is either a variable $X$ or a
negated variable $\bar{X}$.

We say that a $\txor$-formula $\phi$ is satisfiable if there is an
assignment $T: \mathcal{X} \ra \{0,1\}$ of truth values to the
variables $\mathcal{X}$ such that in each clause of $\phi$, an
\emph{even} number of literals is made true.

Given a $\txor$-formula $\phi$, we can construct a system of linear
equations over the two-element field $\FF_2$.  That is, for each
clause $C$ of $\phi$ we construct the equation $x + y + z = c$ where
$x,y,z$ are the variables occurring in the literals of $C$ and $c$ is
$1$ if an odd number of them appear negated and $0$ otherwise.  It is
easily verified that this system of equations has a solution if, and
only if, $\phi$ is satisfiable.  Note that two distinct clauses may
give rise to the same equation.  Say that two clauses are
\emph{equivalent} if they give rise to the same equation.  In the
sequel, we will use the terminology of $\txor$ formulas and of systems
of linear equations interchangeably, according to which is convenient.

So, we can think of a $\txor$ formula with $n$ variables and $m$ clauses as a system of equations $H \tup{x} =
\tup{b}$ where $H$ is an $m \times n$ matrix, $\tup{x}$ the tuple of
$n$ variables and $\tup{b} \in \FF_2^m$ the $m$-tuple of right-hand
sides of the equation.  We say the system is \emph{homogeneous} if the
right-hand side of every equation is $0$.  This corresponds to a
$\txor$ formula in which no variable appears negated.  A homogeneous system is
always satisfiable, as it is satisfied by the assignment of $0$ to
every variable.  Note that a homogeneous system is completely
specified by a collection of $3$-element sets of variables, with one
set for each equation, containing the three variables that appear in
it. 

\paragraph{Random XOR formula.} 
For fixed positive integers $m,n$ we write $F(m,n)$ for the set of all
$\txor$-formulas over the variables $X_1,\ldots,X_n$ containing
exactly $m$ inequivalent clauses.  We also write $\mathcal{F}(m,n)$
for the uniform probability distribution over $F(m,n)$.  It is known
that, for large enough values of $m$ and $n$, with $m > n$, a random
formula drawn from this distribution is unsatisfiable
(see~\cite{PittelS16}).   That is to say that as $n$ increases, for
all $m > n$, the probability that a formula drawn from the
distribution $\mathcal{F}(m,n)$ is satisfiable tends to $0$.

We are interested in $\txor$-formulas that are unsatisfiable but
$k$-locally consistent, for suitable integer $k$.  For our purposes,
we define $k$-local consistency by means of the following pebble game,
played by two players called Spoiler and Verifier.  The game is played
on a $\txor$-formula $\phi$ with $k$ pebbles $p_1,\ldots,p_k$.  At
each move Spoiler chooses a pebble $p_i$ (either one that is already
in play, or a fresh one) and places it on a variable
$X$ appearing in $\phi$.  In response, Verfier has to choose a
value from $\{0,1\}$ for the variable $X$.  If, as a result, there is
a clause $C$ such that all literals in $C$ have pebbles on them and
the assignment of values to them given by Verifier results in $C$
being unsatisfied, then Spoiler has won the game.  Otherwise the game
can continue.  If Verifier has a strategy to play the game forever
without losing, we say that $\phi$ is $k$-locally consistent.

It is also known that for all $k$, the probability that a random
formula drawn from $\mathcal{F}(m,n)$ (with $m> n$) is $k$-locally consistent tends
to $1$ as $n$ increases.  This was proved for formulas of \tsat rather
than \txor in~\cite{Ats04}, but a similar analysis shows the result
also for $\txor$.  Such an analysis can be found in~\cite[Lemma~4]{AD18}. 

\paragraph{Unique Satisfiability.}
As noted above, a homogeneous system of equations is
always satisfiable, as it is satisfied by the assignment of $0$ to
every variable.  We say that the system is \emph{uniquely satisfiable}
if this is the only solution to the system.   It is easy to see that
the set of solutions to $H \tup{x} = 0$ is exactly the null
space of the matrix $H$, as a subset of the vector space $\FF_2^n$.  In particular, the system is uniquely
satisfiable if, and only if, $H$ has rank $n$.

Define $H(m,n)$ to be the set of all homogeneous systems of equations
with $m$ clauses and $n$ variables, and $\mathcal{H}(m,n)$ for the
uniform probability distribution over this set.   We use the following fact about this distribution, established in~\cite{BKK92}
\begin{lemma}\label{lem:unique}
  There is a threshold $t > 1$ such that, for any $\alpha > t$, the probability that a random system
  drawn from $\mathcal{H}(\alpha n,n)$ is uniquely satisfiable tends
  to $1$ as $n$ increases.
\end{lemma}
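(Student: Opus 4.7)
The plan is to recast unique satisfiability as a linear-algebraic condition and apply a union bound. Viewing $H$ as the $m\times n$ incidence matrix over $\FF_2$, the homogeneous system $H\tup{x} = 0$ has only the trivial solution if and only if $H$ has full column rank, equivalently its null space is $\{0\}$. So the target is to show that, for $m = \alpha n$ with $\alpha > t$, the probability that $H$ admits a nonzero null vector tends to $0$ as $n$ grows.

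I would start with the standard first-moment bound
\[
  \Pr[\exists \tup{x} \neq 0 : H\tup{x} = 0]
  \;\leq\; \sum_{w=1}^{n} \binom{n}{w}\, p_w^m,
\]
where $p_w$ is the probability that a single uniformly random $3$-subset of the variables contains an even number (i.e.\ $0$ or $2$) of the $w$ fixed ``ones'' of a weight-$w$ vector. An elementary count gives $p_w = \bigl[\binom{n-w}{3} + \binom{w}{2}(n-w)\bigr]/\binom{n}{3}$; setting $w = \gamma n$ this tends to $p(\gamma) = (1-\gamma)^3 + 3\gamma^2(1-\gamma)$. A Stirling estimate reduces the summand at $w = \gamma n$ to $\exp\bigl(n[h(\gamma) + \alpha \log p(\gamma)]\bigr)$, where $h$ is the binary entropy, so the natural candidate threshold is $t = \sup_{\gamma \in (0,1)} \bigl(-h(\gamma)/\log p(\gamma)\bigr)$. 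For $\alpha$ above this value the exponent is uniformly negative on any compact sub-range $[\gamma_0, 1-\gamma_0]$, so that portion of the sum vanishes.

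The main obstacle, and the reason this proof is delicate, is the regime where $\gamma$ is close to $0$ (and, by symmetry, close to $1$). The weight-$1$ contribution alone is roughly $n(1-3/n)^{\alpha n} \sim n\, e^{-3\alpha}$, so a naive union bound does not immediately go to zero for constant $\alpha$: a weight-$1$ null vector exists precisely when some variable fails to appear in any clause. The standard fix is a structural peeling argument on the underlying $3$-uniform hypergraph, iteratively removing variables of degree at most $1$ together with their incident clauses and driving down to the \emph{2-core}. Columns for peeled variables can be analyzed separately, and the rank question on the 2-core is amenable to a cleaner moment calculation. This is exactly the kind of analysis carried out in~\cite{BKK92} for random sparse $\FF_2$-matrices with row-weight $3$, and my plan is to invoke that machinery for the small-weight regime and combine it with the first-moment estimate above over the bulk range to obtain the stated threshold $t$.
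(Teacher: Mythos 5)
First, note that the paper does not actually prove this lemma: it is imported as a known fact from~\cite{BKK92}, with only the reformulation ``uniquely satisfiable iff $H$ has rank $n$'' supplied in the surrounding text. So the relevant question is whether your sketch would stand on its own, and there is a genuine gap at exactly the point you flag as ``delicate''. The low-weight regime is not a technical looseness of the union bound that peeling can repair; it is where the statement, in the naive formalization you adopt (``$\Pr[\exists \tup{x}\neq 0: H\tup{x}=0]\to 0$'' for $m=\alpha n$ uniformly random $3$-clauses), actually fails. A fixed variable is missed by a single uniform $3$-subset with probability $1-3/n$, so it occurs in no clause with probability about $e^{-3\alpha}$, and the expected number of uncovered variables is about $n e^{-3\alpha}\to\infty$ for any constant $\alpha$; a routine second-moment (Poisson) argument then gives that with probability tending to $1$ there \emph{is} an uncovered variable, i.e.\ a zero column of $H$ and hence a weight-$1$ kernel vector. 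Your own weight-$1$ estimate $n e^{-3\alpha}$ is telling you this: it is the expected number of genuine nonzero solutions, not an artifact of a weak bound. Peeling to the $2$-core cannot rescue the argument, because the degree-$0$ variables you peel away are themselves witnesses that the system is not uniquely satisfiable; the core/hypercycle machinery of~\cite{BKK92} concerns linear dependencies among the \emph{rows} (the kernel of $H^{\top}$), and transferring it to full column rank requires, at minimum, conditioning on (or otherwise guaranteeing) that every variable occurs in some clause. Relatedly, your candidate threshold $t=\sup_{\gamma\in(0,1)}\bigl(-h(\gamma)/\log p(\gamma)\bigr)$ is $+\infty$: as $\gamma\to 0$ one has $\log p(\gamma)\approx -3\gamma$ while $h(\gamma)\approx\gamma\log(1/\gamma)$, so the ratio blows up, and restricting to $[\gamma_0,1-\gamma_0]$ silently discards precisely the regime that决 decides the matter.

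The bulk of your calculation is fine: for $w=\gamma n$ with $\gamma$ bounded away from $0$ and $1$, the estimate $p(\gamma)=(1-\gamma)^3+3\gamma^2(1-\gamma)$ and the exponential-rate comparison $h(\gamma)+\alpha\log p(\gamma)<0$ do kill that part of the sum for $\alpha$ large enough. But to prove the lemma as stated you would have to pin down the exact sense in which~\cite{BKK92} handles the sparse/low-weight contribution (minimum-degree conditioning, a different model, or a different notion of rank), and the paper itself signals that this formulation is delicate — an earlier version of the lemma contained an error that had to be corrected. As written, your proposal asserts that the problematic regime can be ``analyzed separately'' when in fact that separate analysis would return the opposite conclusion, so the proof does not go through.
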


A homogeneous system is necessarily satisfiable, and so also
$k$-locally consistent for all $k$.   If it is uniquely satisfiable, it has no solutions when we require some fixed variable $X_i$ to take value $1$.  However, for a randomly chosen such system, adding the condition $X_i = 1$ leaves it 
 $k$-locally consistent for small values of $k$.  
To be precise, there is a
constant $\gamma$ such that if $\phi$ is a system $H  \tup{x} = 0$ chosen at random from
$\mathcal{H}(\alpha n,n)$, then with high probability there is some
$i$ such that if $\phi_i$ is the system obtained from $\phi$ by adding the equation $X_i = 1$ 
then $\phi_i$ is $\gamma n$-locally
consistent.  This is a much weaker statement than proved in~\cite[Lemma
4]{AD18} where it is shown that changing the right-hand sides of a
random subset of the vertices to $1$ still leaves the system
$\gamma n$-locally consistent.

\paragraph{CFI construction}
The construction of Cai, F\"urer and Immerman~\cite{CFI92} provides us
with examples of pairs of non-isomorphic graphs which are not
distinguished by the $k$-dimensional Weisfeiler-Leman test.
Inspired by this, a construction described in~\cite[Prop.~32]{ADW17} shows how
to convert any $k$-locally consistent system of equations in
$H(m,n)$ to one that cannot be distinguished by the $k$-dimensional
Weisfeiler-Leman test from its homogeneous companion.  Here, the
\emph{homogeneous companion} of a system $H \tup{x} =\tup{b}$ is $H
\tup{x} =\tup{0}$ (see also~\cite[Lemma 3]{AD18} for a similar
argument).  Here we adapt the construction, to turn an arbitrary 
system $\phi$ into a graph $G_\phi$ with the property that the local
consistency of
$\phi$ translates into a lower bound on the Weisfeiler-Leman dimension
of $G_\phi$.  Moreover, the \emph{unique satisfiability} of $\phi$
guarantees that $G_\phi$ is asymmetric.

For any $\txor$-formula $\phi$, we define the graph $G_{\phi}$ by the
following construction.  If $\phi$ has $m$ inequivalent clauses and
$n$ variables, $G_{\phi}$ has a total of $4m + 2n + 3(n-1)$ vertices.

Let $X_1,\ldots,X_n$ be the $n$ variables in some fixed order.  For
each clause $C$ of $\phi$, we define the four clauses $C_{000},
C_{011}, C_{110}$, and $C_{101}$ by letting $C_{000} = C$ and
$C_{011}, C_{110}, C_{101}$ be
the three clauses equivalent to $C$ obtained by negating exactly two
of the literals of $C$.  In particular, $C_{011}$ is obtained from
$C_{000}$ by negating the second and third variables in the clause,
$C_{110}$ by negating the first and second and $C_{101}$ by negating
the first and third.  Here, the terms ``first'', ``second'' and
``third'' refer to the numberical order of the variables chosen above.

We then have a vertex in $G_{\phi}$ for each
of these clauses.  Also, for each variable $X$ in $\phi$, we have two
vertices $X^0$ and $X^1$.  In addition, for each $i$ with $1 \leq i <
n$ we have three vertices $i_l,i_r,i_s$.

\begin{figure}[h]
\centering
  \def\svgwidth{0.5\textwidth}
  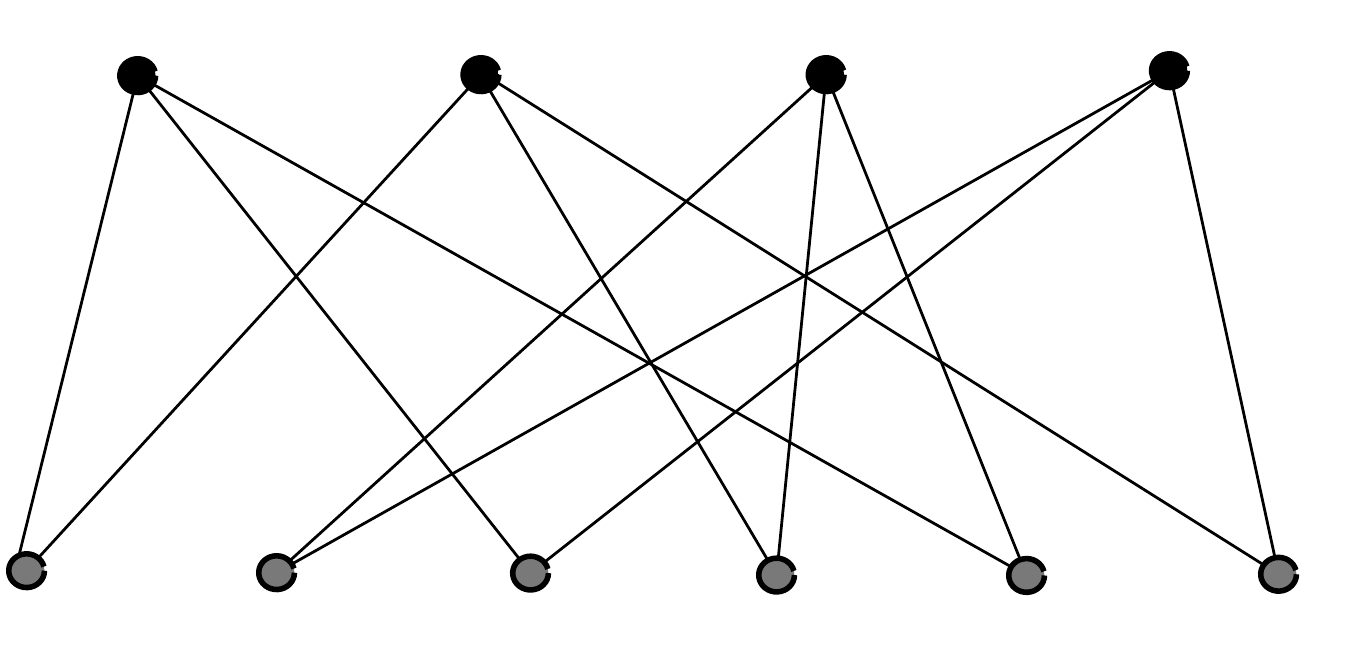
\caption{Clause gadget in $G_{\phi}$ corresponding to the clause $X
  \xor Y\xor Z$}\label{fig:cfi-gadget}
\end{figure}
The edges are described as follows.  For each clause $C$, if the
literal $X$ occurs in $C$, we have an edge from $C$ to $X^1$ and if
the literal $\bar{X}$ occurs in $C$, we have an edge from $C$ to
$X^0$.  There is an edge between $X^0$ and $X^1$.  These are depicted
in Figure~\ref{fig:cfi-gadget}.  These capture the essence of the CFI-like
construction.  In addition, for each $i$ we also have the edges: $(i_l,i_r)$, $(i_r,i_s)$
and $(i_l,X_i^0)$, $(i_l,X_i^1)$, $(i_r,X_{i+1}^0)$ and
$(i_r,X_{i+1}^1)$.  These additional edges are depicted in
Figure~\ref{fig:rigid}.  The purpose of the additional gadget
involving the vertices $i_l,i_r$ and $i_s$ is to remove some
symmetries on the graph by imposing the chosen order on the set of
variables. 
\begin{figure}[h]
\centering
  \def\svgwidth{0.4\textwidth}
  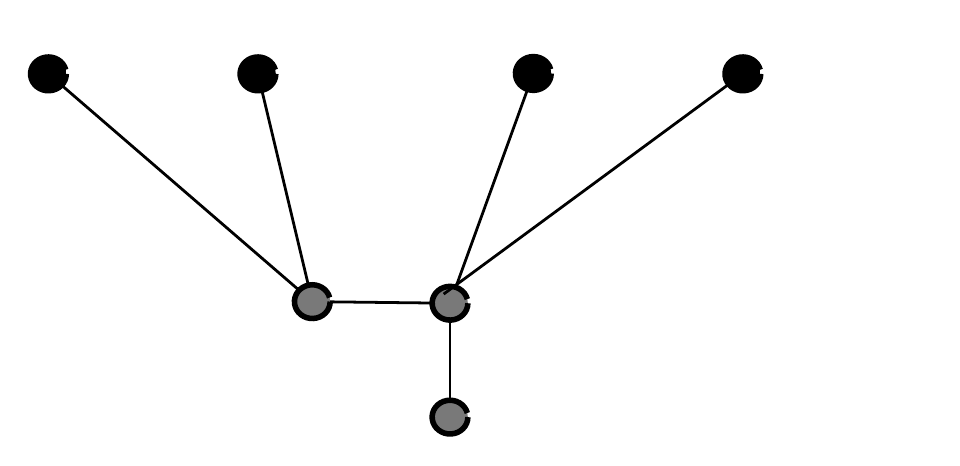
\caption{Asymmetry gadget in $G_{\phi}$} \label{fig:rigid}
\end{figure}

Now, fix a homogeneous system of equations $\phi$, and
let $G_\phi$ be the graph obtained by the above construction.  Also,
let $\phi_i$ be the system obtained by adding the equation $X_i = 1$ to $\phi$.

\begin{lemma}\label{lem:consistency}
  If $\phi_i$ is $3(k+1)$-locally consistent, then $X_i^0 \equiv^k X_i^1$ in $G_{\phi}$.
\end{lemma}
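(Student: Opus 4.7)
By Hella's characterisation of $\kequiv$ via the bijective pebble game, it suffices to show that Duplicator wins the $(k+1)$-pebble bijective game on $G_\phi$ starting from the position in which pebble~$1$ marks the pair $(X_i^0, X_i^1)$.

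Duplicator's strategy is a simulation of Verifier's winning strategy $\sigma_V$ in the $3(k+1)$-local consistency game on $\phi_i$. Each WL pebble is associated with up to three \emph{shadow pebbles} in the local consistency game: a pebble on a variable vertex $X^b$ contributes one shadow pebble on $X$; a pebble on a clause vertex $C_s$ contributes three shadow pebbles, one on each variable of $C$; a pebble on an asymmetry-gadget vertex contributes none. With $k+1$ WL pebbles in play and at most three shadow pebbles per WL pebble, the shadow-pebble count is at most $3(k+1)$, so $\sigma_V$ is always applicable. Duplicator records $\sigma_V$'s responses as a partial assignment $T$ to the shadow-pebbled variables; since shadow pebbles remain attached to their WL pebbles, Verifier's commitments stay coherent for as long as they are needed. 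The initial position is consistent with $T(X_i)=1$, which $\sigma_V$ is forced to supply by the unit clause $X_i = 1$ in $\phi_i$.

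When Spoiler selects pebble~$j$, Duplicator presents the bijection $\pi$ induced by the current $T$: it fixes every asymmetry-gadget vertex, sends $X^b \mapsto X^{b \oplus T(X)}$ whenever $X$ is shadow-pebbled and $X^b \mapsto X^b$ otherwise, and on each clause $C$ acts by $C_s \mapsto C_{s \oplus d(C)}$, where the twist vector $d(C)$ has $d(C)[Y] = T(Y)$ at each shadow-pebbled variable $Y$ of $C$ and the remaining bits chosen so that $d(C)$ has even weight. When all three variables of $C$ are shadow-pebbled, the required even-weight condition reduces to $T(Y_1)\oplus T(Y_2)\oplus T(Y_3)=0$, which is guaranteed by Verifier's consistency on the clause $C$. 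Once Spoiler chooses $x$, Duplicator updates the shadow pebbles and queries $\sigma_V$ for the newly shadow-pebbled variables; when $x$ is a clause vertex $C_s$ for a clause $C$ not previously fully shadowed, the values returned by $\sigma_V$ on the (at most three) new variables must satisfy $C$, which is exactly what makes the previously committed free bits of $d(C)$ consistent with the extended $T$.

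The correctness calculation reduces to the CFI adjacency identity: an edge between $C_s$ and $X^b$ is present exactly when $X$ lies in the support of $C$ and $s[X] = \bar b$, and under $\pi$ both $s[X]$ and $b$ are shifted by the same quantity $T(X)$, so edges among all pebbled pairs are preserved. The main obstacle, and the source of the factor of three in the hypothesis, is maintaining the coherence of $T$ across rounds: every variable relevant to a pebbled vertex must remain shadow-pebbled throughout its useful life, and the free-bit choices in $d(C)$ must agree with any later Verifier extension. Both requirements reduce to the consistency guarantee supplied by the $3(k+1)$-local consistency of $\phi_i$.
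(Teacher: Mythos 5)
Your overall architecture matches the paper's proof: simulate Verifier's winning strategy in the $3(k+1)$-pebble consistency game on $\phi_i$, keep a partial assignment $T$ on the at most $3(k+1)$ ``relevant'' variables (one per pebbled variable vertex, three per pebbled clause vertex), realise $T$ as swaps $X^b \mapsto X^{b\oplus T(X)}$ and even-weight twists $C_s \mapsto C_{s\oplus d(C)}$, and use the adjacency identity of the gadget to check partial isomorphism. That is the intended argument. However, there is a genuine gap in the timing of the simulation. In the bijective game Duplicator must commit to the bijection \emph{before} Spoiler chooses the vertex, yet your bijection is built only from the current $T$: it is the identity on every variable pair that is not yet shadow-pebbled, and it completes $d(C)$ with \emph{arbitrary} even-weight free bits; you consult $\sigma_V$ only after Spoiler has placed the pebble. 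At that moment the value is already locked in by the new pebble pair (e.g.\ the pair $(X^b,X^b)$ forces $T(X)=0$ for as long as that pebble stands, and a pebbled pair $(C_s,C_{s\oplus d(C)})$ forces the free bits of $d(C)$), and nothing guarantees that these default values agree with what $\sigma_V$ would have answered. Your closing claim that this coherence ``reduces to the $3(k+1)$-local consistency of $\phi_i$'' is exactly the unproved step: local consistency guarantees that \emph{some} answers keep Verifier alive, not that the default answers do. Indeed, if $\phi$ contains, say, the clauses $X_i \oplus B \oplus C$ and $B \oplus C \oplus D$, then $X_i=1$ forces $D=1$ by a two-clause chain; $\phi_i$ can still be $3(k+1)$-locally consistent (Verifier simply answers $D=1$), but your default identity response when Spoiler pebbles a $D$-vertex commits $T(D)=0$, and Spoiler then wins within a handful of moves by pebbling the two clause gadgets and $B$, $C$: no even-weight twists are simultaneously compatible with the pebbled pairs. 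So the strategy as stated fails on formulas that satisfy the hypothesis.

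The repair is the step the paper's proof performs implicitly: when Spoiler lifts pebble $j$, Duplicator builds the bijection by querying $\sigma_V$ on \emph{hypothetical} extensions of the current consistency-game position --- for each variable $X$ not yet determined, the extension that pebbles $X$ (use Verifier's answer as the swap bit), and for each clause $C$ not yet determined, the extension that pebbles its at most three undetermined variables (use Verifier's answers, which automatically have the right parity, as the twist $d(C)$). These hypothetical branches need not cohere with one another globally, and they need not: only the branch corresponding to Spoiler's actual placement is retained, and it then becomes the new real position of the simulated consistency game. This is precisely where the budget of $3(k+1)$ consistency-game pebbles against $k+1$ graph pebbles is spent. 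With the order of ``query $\sigma_V$'' and ``present the bijection'' corrected in this way, the rest of your argument (shadow-pebble accounting, the adjacency identity $s[X]=\bar b$, the even-weight twists, and the initial position $T(X_i)=1$) goes through and coincides with the paper's proof.
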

\begin{proof} 
The proof follows along the lines of~\cite[Lemma 3]{AD18} by showing
a winning strategy for Duplicator in the bijective $(k+1)$-pebble game played
on $G_{\phi}$ starting in the position $\tup{u}, \tup{v}$ where
$\tup{u}$ is the tuple consisting of the vertex  $X_i^0$ repeated $k+1$
times and $\tup{v}$ is the tuple consisting of the vertex  $X_i^1$ repeated $k+1$
times.  We give a brief outline.

Duplicator's strategy will always be to play a bijection that is the
identity on the vertices  $i_l,i_r,i_s$.  For each variable $X$ it
maps the set $\{X^0, X^1\}$ to itself (though it may swap these two
vertices) and for each clause $C$ it maps the set
$\{C_{000},C_{011},C_{110},C_{101}\}$ to itself (though it may permute these
elements).  Moreover the permutation induced on $\{C_{000},C_{011},C_{110},C_{101}\}$
must be either the identity or a permutation induced by swapping $X^0$
and $X^1$ for exactly two variables $X$ appearing in the clause $C$.
Call a bijection meeting these requirements \emph{well-defined}.

Given a position $\tup{u},\tup{v}$ in the bijective game, we say that
it is \emph{consistent} if there is a well-defined bijection $\beta$
taking $\tup{u}$ to $\tup{v}$ and such that for any $C_i \in \tup{u}$
if $\beta(C_i) = C_j$ where $C_j$ is obtained from $C_i$ by swapping
$X^0,X^1$ and $Y^0,Y^1$ then $\beta(X^0) = X^1$ if either of $X^0$ or
$X^1$ is in $\tup{u}$ and similarly $\beta(Y^0) = Y^1$ if either of $Y^0$ or
$Y^1$ is in $\tup{u}$.

Consider now a consistent position  $\tup{u},\tup{v}$ and let $\beta$
be a well-defined bijection witnessing this.  Let
$U$ be the set of variables of $\phi$ containing all variables $X$ such
that either $X^0$ or $X^1$ appears in $\tup{u}$ or $X$ appears in some
clause $C$ such that one of $C_{000},C_{011},C_{110},C_{101}$ appears in $\tup{u}$.
Note that $U$ has at most $3(k+1)$ elements.  
Now, we define the map $T: U \ra \{0,1\}$ given by $T(X) = 0$ if
$\beta(X^0)=X^0$ and $T(X) = 1$ if $\beta(X^0)=X^1$.  Duplicator's strategy is to ensure that $T$ is a
winning position for Verifier in the $3(k+1)$-local consistency game on the formula
$\phi$.  That is to say, starting in the position where pebbles are
placed on the variables in $U$, and Verifier's responses are given by
$T$, Verifier can continue and play forever.

This condition is satisfied by the initial position, as $\tup{u}$ is
just the element $X_i^0$ repeated $k+1$ times, so $U = \{X_i\}$, and $T$
is the map taking $X_i$ to $1$.  But, the fact that $\phi_i$ is
$k+1$-locally consistent implies that this is a winning position for
Verifier.   Now, to see that the Duplicator can maintain the position,
suppose at some stage in the bijective game, Spoiler moves pebble
$j$.  Duplicator needs to construct a well-defined bijection such that
anywhere Spoiler places the pebble will result in a consistent
position.  Spoiler's move can be translated into a move in the local
consistency game from the current position $T$.  Duplicator's possible
responses in that game define a function from the variables $X$
to $\{0,1\}$ and this can be turned into a well-defined bijection.  
\end{proof}

\paragraph{Asymmetry}
Finally, we want to argue that if the homogeneous system $\phi$ is
uniquely satisfiable, then $G_{\phi}$ is asymmetric.  Before giving
the proof, we give some intuition.  The graph $G_\phi$ has two
vertices $X^0$ and $X^1$ for each variable $X$ of $\phi$.  Consider
first a permutation $\pi$ of these vertices which fixes each set
$\{X^0,X^1\}$.  This gives rise to a map $T$ from the variables of
$\phi$ to $\{0,1\}$ such that $T(X) = 1$ if, and only if, $\pi$
exchanges $X^0$ and $X^1$.  To extend $\pi$ to an automorphism of
$G_\phi$ would require us to permute the vertices corresponding to
clauses in such way that fixed each set $\{C_{000},C_{011},C_{110},C_{101}\}$.  This can only happen if for exactly two of the variables
$X$ appearing in the clause $C$ do we have $T(X) = 1$.  In other
words, this requires $T$ to be an assignment satisfying $\phi$.  Are
there other automorphisms of $G_\phi$ that do not fix the sets
$\{X^0,X^1\}$?  The presence of the vertices $i_l,i_r,i_s$ ensures
that these sets are effectively linearly ordered and no other
automorphisms are possible.  This is formally proved below.

\begin{lemma}\label{lem:asymmetry}
If $\phi$ is homogeneous, then it is uniquely satisfiable if, and only if, 
$G_{\phi}$ is asymmetric.
\end{lemma}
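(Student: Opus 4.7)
The plan is to prove the biconditional in two directions. For the easy direction (non-unique satisfiability implies non-asymmetric), I would start from a nontrivial satisfying assignment $T\colon\{X_1,\ldots,X_n\}\to\{0,1\}$ of $\phi$ and construct a nontrivial automorphism $\sigma$ of $G_\phi$ that fixes every rigid-gadget vertex $i_l, i_r, i_s$, swaps $X_i^0$ and $X_i^1$ exactly when $T(X_i)=1$, and sends the clause vertex $C_{(a,b,c)}$ of any clause $C$ on variables $X,Y,Z$ to $C_{(a\oplus T(X),\,b\oplus T(Y),\,c\oplus T(Z))}$. The rigid-gadget and $X_i^0{-}X_i^1$ edges are preserved trivially, because each $i_l$ (resp.\ $i_r$) sits symmetrically over $\{X_i^0,X_i^1\}$ (resp.\ $\{X_{i+1}^0,X_{i+1}^1\}$). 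The clause-to-variable edges are preserved by the definition of $\sigma$ on clause vertices; the only subtlety is that the shifted triple must still lie in the admissible set $\{(0,0,0),(0,1,1),(1,1,0),(1,0,1)\}$, which consists precisely of the triples of odd parity, and this holds exactly when $T(X)\oplus T(Y)\oplus T(Z)=0$, i.e., when $T$ satisfies the homogeneous clause $C$.

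For the hard direction, I would take an automorphism $\sigma$ of $G_\phi$ and aim to extract a satisfying assignment $T$ of $\phi$ with $T=0$ iff $\sigma=\mathrm{id}$. The bulk of the work, and the main obstacle, is to show that $\sigma$ pointwise fixes every rigid-gadget vertex. The plan is a three-step argument along the gadget. First, a direct degree count shows that every non-gadget vertex has degree at least $2$ (variables carry the $X_i^0{-}X_i^1$ edge plus at least one gadget attachment, and clauses have degree $3$), while each $i_s$ has degree $1$; so $\sigma$ permutes the $i_s$'s by some $\pi\in S_{n-1}$, and consequently $\sigma(i_r)=\pi(i)_r$ since $i_r$ is the unique neighbour of $i_s$. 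Second, among the three remaining neighbours $\{i_l, X_{i+1}^0, X_{i+1}^1\}$ of $i_r$, the pair $X_{i+1}^0, X_{i+1}^1$ is joined by an edge whereas $i_l$ is adjacent to neither of them, so $i_l$ is the isolated vertex of that induced subgraph and must map to $\pi(i)_l$. Third, reading off the remaining neighbours of $\sigma(i_l)$ and $\sigma(i_r)$ yields $\sigma\{X_j^0,X_j^1\}=\{X_{\pi(j)}^0,X_{\pi(j)}^1\}$ for $j\le n-1$ and $\sigma\{X_{j+1}^0,X_{j+1}^1\}=\{X_{\pi(j)+1}^0,X_{\pi(j)+1}^1\}$; matching these on overlapping indices gives $\pi(j)=\pi(j-1)+1$ for $2\le j\le n-1$, which combined with $\pi\in S_{n-1}$ pins down $\pi=\mathrm{id}$.

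Once every rigid-gadget vertex is fixed, $\sigma$ preserves each pair $\{X_i^0,X_i^1\}$ as a set, so I define $T(X_i)\in\{0,1\}$ according to whether $\sigma$ fixes or swaps that pair. Running the first paragraph's argument in reverse, the action of $\sigma$ on the four clause vertices $\{C_{000},C_{011},C_{110},C_{101}\}$ of any clause $C$ is then forced to be $C_{(a,b,c)}\mapsto C_{(a\oplus T(X),\,b\oplus T(Y),\,c\oplus T(Z))}$, and this lands in the admissible set exactly when $T(X)\oplus T(Y)\oplus T(Z)=0$; hence $T$ satisfies every clause of $\phi$. If $\sigma\ne\mathrm{id}$, then $T\ne 0$, since otherwise every induced clause permutation would also be identity and $\sigma$ itself would reduce to the identity. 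A nontrivial automorphism therefore yields a nonzero solution, contradicting unique satisfiability. The most delicate part is the middle step of the three-step argument, where I rely on the ``isolated-neighbour'' characterisation of $i_l$ to single it out locally, without any extra appeal to degree beyond the initial identification of the $i_s$'s.
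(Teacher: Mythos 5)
Your proposal is correct and follows essentially the same route as the paper: both directions use the same correspondence between satisfying assignments and automorphisms, and your pinning of the asymmetry gadget (degree-$1$ vertices $i_s$, then $i_r$, then $i_l$ via the local induced subgraph, then the recurrence $\pi(j)=\pi(j-1)+1$ forcing $\pi=\mathrm{id}$) is just a reformulation of the paper's chain-propagation argument that fixes $S$, $R$, $L$ pointwise. One cosmetic slip: the admissible subscript triples $\{000,011,110,101\}$ are those of \emph{even} parity, not odd, though the conclusion you draw (the shifted triple stays admissible iff $T(X)\oplus T(Y)\oplus T(Z)=0$) is unaffected.
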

\begin{proof}
Let $\alpha$ be any automorphism of $G_{\phi}$.   Note that every clause vertex $C$ has degree 3.  Every variable vertex $X^0$ or $X^1$ has degree at least 4.  Every vertex $i_s$ has degree 1.  Thus, each of the following sets is fixed (set-wise) by $\alpha$:
\begin{itemize}
\item the set $S= \{i_s \mid 1\leq i < n\}$: this is the set of vertices of degree 1;
\item the set $R = \{i_r \mid 1\leq i < n\}$: this is the set of vertices adjacent to a vertex in $S$;
\item the set of clause vertices $\mathcal{C}$ : this is the set of vertices of degree 3 that are not within distance 2 of a vertex in $S$;
\item the set of variable vertices $\mathcal{X}$: this is the set of neighbours of $\mathcal{C}$; and 
\item the set $L = \{i_l \mid 1\leq i < n\}$: this is everything else.
\end{itemize}

Indeed, we can say more.  Each of the sets $S$, $L$ and $R$ is fixed \emph{pointwise} by $\alpha$.  If this were not so, there would be some $i,j$ with $i < j$ such that $\alpha(i_s) = j_s$ (since the set $S$ is fixed).  Then, $\alpha(i_r) = j_r$ (since these are the sole neighbours), $\alpha(\{X_i^0,X_i^1\}) = \{X_j^0,X_j^1\}$ (since these are the only neighbours in $\mathcal{X}$ of $i_r$ and $j_r$ respectively), and so $\alpha((i+1)_l) = (j+1)_l$ and $\alpha((i+1)_r) = (j+1)_r$.  Proceeding by induction, we have for all $k$ $\alpha((i+k)_r) = (j+k)_r$.  Taking $k$ large enough so that $j+k > n \geq i+k$, we get a contradiction.

It also follows that, for each variable $X$, $\alpha(\{X^0,X^1\}) = \{X^0,X^1\}$.  That is, $\alpha$ either fixes each of the two vertices or it interchanges them.  Note that if $\alpha$ fixes all the variable vertices, then it is the identity everywhere, since no two vertices in $\mathcal{C}$ have the same neighbours in $\mathcal{X}$.  Let $T$ be the assignment that maps $X$ to $0$ if $\alpha$ is the identity on $\{X^0,X^1\}$ and $1$ otherwise.  We can check that $T$ satisfies $\phi$.

In the other direction, suppose there is a truth assignment $T$ that
satisfies $\phi$.  Now consider the map on $\mathcal{X}$ that exchanges
the vertices $X^0$ and $X^1$ just in case $T(X) = 1$ and is the
identity everywhere else.  We extend this to a map  on $\mathcal{C}$ as
follows.  For any clause $C$ of $\phi$, there are either $0$ or $2$ variables
$X$ in $C$ for which $T(X)=1$, since $T$ is a satisfying assignment.
In the first case, we let our map be the identity on
$\{C_{000},C_{011},C_{110},C_{101}\}$ and in the latter case it is the
unique permutation of this set induced by exchanging $X^0$ and $X^1$
for the two variables such that $T(X)=1$.  Finally, we also define the
map  to be the identity on the set $L \cup R \cup S$.  It is now easy
to see that this map is an automorphism.
\end{proof}

We can conclude the description of the construction with the statement
of a theorem.

\begin{theorem}\label{thm:asymmetric}
  There is a family of asymmetric graphs $G_k$ with $O(k)$ vertices
  and Weisfeiler-Leman dimension $k$.
\end{theorem}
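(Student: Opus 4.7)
The plan is to combine the three preceding results on random homogeneous systems with the construction $\phi \mapsto G_\phi$. Fix $\alpha > t$, where $t$ is the threshold from Lemma~\ref{lem:unique}, and let $\gamma$ be the constant from the $\gamma n$-local consistency statement recalled just after that lemma. Given a target $k$, choose the smallest integer $n$ with $\gamma n \geq 3(k+1)$, so that $n = \Theta(k)$, and draw $\phi$ from $\mathcal{H}(\alpha n, n)$.

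By Lemma~\ref{lem:unique}, the event $A_n$ that $\phi$ is uniquely satisfiable has probability tending to $1$ as $n \to \infty$. By the $\gamma n$-local consistency result cited after that lemma, the event $B_n$ that there exists an index $i$ with $\phi_i$ (the system obtained from $\phi$ by adjoining $X_i = 1$) being $\gamma n$-locally consistent also has probability tending to $1$. A union bound then gives $\Pr(A_n \cap B_n) \to 1$, so for each sufficiently large $k$ we can fix a concrete $\phi$ and index $i$ witnessing both properties, and set $G_k \defeq G_\phi$.

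It then remains only to read off the three conclusions. Asymmetry of $G_k$ follows immediately from Lemma~\ref{lem:asymmetry}, since $\phi$ is homogeneous and uniquely satisfiable. For the Weisfeiler-Leman lower bound, the inequality $\gamma n \geq 3(k+1)$ lets us apply Lemma~\ref{lem:consistency} to obtain $X_i^0 \equiv^k X_i^1$ in $G_k$; because $G_k$ is asymmetric its orbit partition is discrete, so $X_i^0$ and $X_i^1$ are in distinct orbits, witnessing $\WL{G_k} > k$. The vertex count is $4m + 2n + 3(n-1) = (4\alpha + 5)n - 3 = O(n) = O(k)$, giving the size bound.

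The only real obstacle is ensuring that unique satisfiability and the existence of an index $i$ with $\phi_i$ locally consistent occur simultaneously for a single $\phi$; this is dispatched by the union bound above, since both happen with probability tending to $1$. Everything else is a direct invocation of the lemmas already established in this section, together with the elementary vertex count of the construction.
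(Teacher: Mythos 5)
Your proposal is correct and follows essentially the same route as the paper's proof: sample a homogeneous system from $\mathcal{H}(\alpha n,n)$, invoke Lemma~\ref{lem:unique} for unique satisfiability, the cited local-consistency result for $\phi_i$, then Lemma~\ref{lem:asymmetry} for asymmetry and Lemma~\ref{lem:consistency} for the Weisfeiler-Leman lower bound. Your version merely spells out details the paper leaves implicit (the union bound combining the two high-probability events, the vertex count, and the step from $X_i^0 \equiv^k X_i^1$ plus discreteness of the orbit partition to $\WL{G_k} > k$), which is fine.
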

\begin{proof}
  By Lemma~\ref{lem:unique}, there is an $\alpha$ such that for sufficiently large $n$, a randomly chosen homogeneous
  system of equations $\phi$ from $\mathcal{H}(\alpha n,n)$ is
  uniquely satisfiable with positive probability.  By Lemma~\ref{lem:asymmetry}, $G_{\phi}$ is
  then asymmetric.  On the other hand, by~\cite[Lemma~5]{AD18},
  $\phi_i$ is $\gamma n$-locally consistent for any $i$ with high probability, which implies
  by Lemma~\ref{lem:consistency} that $G_\phi$ has WL-dimension at
  least $\frac{1}{3}\gamma n - 1$.
\end{proof}
Note that we really have proved, not only the existence of such a
family, we have described a random process that produces such graphs with high probability.

\paragraph{Size Bounds}

While Theorem~\ref{thm:asymmetric} tells us that the graphs $G_k$ have
size linear in $k$, the actual size bounds are somewhat less clear.
Specifically, there is a probabilistic element to the construction
that relies on constructing a uniquely satisfiable formula $\phi$
such that for some $i$, $\phi_i$ is $k$-locally consistent.  What we
know is that for any $k$, and any large enough $n$, a randomly
constructed formula with $n$ variables and $m=\alpha n$ clauses ($\alpha > 1$) will
have these properties with high probability.  How big does $n$ have to
be before the probability becomes significant?  A direct calculation
does not give much cause for optimism.

Our argument for why a random formula is $k$-locally consistent with
high probability is based on~\cite[Lemma 3]{AD18}, which in turn
relies on the argument for expansion of a random $3$-uniform
hypergraph given in~\cite{BSW01}.  The key combinatorial bound in  on width is \cite{BSW01}[Lemma~6.6], attributed to~\cite{Beame}.  A simple
calculation shows that we need $m$ to be around $10^7$ in order to be
guaranteed a width lower bound of $2$ (i.e.\ that a formula will be
$2$-locally consistent with probability greater than $1/2$).  With $m$
around $10^9$, we get reasonably high bounds on width, but these would
be much larger graphs than we wish to consider.  What we show in the
rest of the paper is experimental results which show that even for
much smaller values of $n$ and $m$, a random sample produces graphs
that are difficult for isomorphism testers.  We combine random
generation of $\txor$ formulas with a filtering process which is aimed
at improving the likelihood of getting locally-consistent instances.
We describe this in the next section.

\section{Experimental Setup}\label{sec:setup}

Section~\ref{sec:construction} established a theoretical result that
shows that a random graph constructed in a particular way has the
properties of being asymmetric and having high Weisfeiler-Leman
dimension.  In outline, we want to start
with a random homogeneous $\txor$ formula on $n$ variables with
$\alpha n$ clauses, i.e.\ a random 3-uniform hypergraph on $n$
vertices and convert it into the graph $G_\phi$.  This graph is
asymmetric if $\phi$ is uniquely satisfiable (which occurs with high
probability).  Moreover $G_\phi$ has Weisfeiler-Leman dimension at
least $k$ if $\phi$ is $3k$-locally satisfiable, an event that also
occurs with high probability.  We now describe an experimental setup for producing such
graphs (with up to a few thousand nodes) by starting with a random
formula and applying a succession of filters.  In the process we apply a
number of heuristics in addition to the theoretical approach outlined
above.  To motivate and justify these heuristics, we now break up the
construction in a slightly different way.  

\paragraph{Asymmetry}
Consider $\phi$, a homogeneous $\txor$ formula with $n$ variables
$\mathcal{X} = \{X_1,\dots,X_n\}$ and $m$ clauses $\mathcal{C} = \{C_1,\ldots,C_m\}$.  We identify this
system with a bipartite graph $\Phi$ with vertices $\mathcal{C}$ on
the left and $\mathcal{X}$ on the right and an edge between $X$ and
$C$ if $X$ appears in the clause $C$.  Note that because $\phi$ is
homogeneous, the graph determines $\phi$ completely.  The construction
of the graph $G_{\phi}$ described above can now be broken up into two
steps.  In the first step, we produce a graph $G^1_{\phi}$ by
replacing each $X \in \mathcal{X}$ by two vertices $X^0$ and $X^1$ and
each $C \in \mathcal{C}$ with four vertices $\{C_{000},C_{011},C_{110},C_{101}\}$ and
connecting them as described above and illustrated in Figure~\ref{fig:cfi-gadget}.  In the second step, we augment
the graph $G^1_{\phi}$ with additional vertices $i_l, i_r, i_s$ for $i
\in \{1,\dots,n\}$. 

Note that the reason for the second step is effectively to impose a
linear order on the set of variables $\mathcal{X}$ and thereby ensure
that the only automorphisms of $G_{\phi}$ are the ones generated by
satisfying assignments to $\phi$.  Thus, if the graph $\Phi$ is itself
asymmetric, the second step is unnecessary as it is easily seen that
in this case the only automorphisms of $G^1_{\phi}$ are generated from
a satisfying truth assignment to $\phi$ by swapping $X^0$ and $X^1$
for all variables that are set to \texttt{true}.  How likely is it
that a random $\Phi$ (i.e.\ a random left-$3$-regular bipartite graph
with $m$ nodes on the left and $n$ on the right) is asymmetric?

We can think of $\Phi$ as a $3$-uniform hypergraph on $n$ nodes, with
$m$ edges.  It is easy to show (using the same methods that show that
a random graph is asymmetric (see~\cite[Chap.~9]{Bollobas})) that if
$m$ is roughly $n \log n$, then a random $3$-uniform hypergraph is,
indeed, asymmetric with probability going to $1$ as $n$ increases.
This is not the case when $m = \alpha n$ for constant $\alpha$.
However, our experiments show that in the range of values of $n$ we
worked with ($n$ up to about $3000$, and $m$ between $1$ and $5$), there was a
reasonably high probability of coming up with an asymmetric
hypergraph.  Moreover, if $\Phi$ is asymmetric, this is reasonably
quick to check with a tester such as \texttt{nauty/Traces} as it is
also highly probable that vertex refinement gives the orbit
partition.  It is only when $\Phi$ is converted to $G^1_{\phi}$ that
we get high Weisfeiler-Leman dimension.  Thus, for the purpose of the
experiments, instead of generating the graphs $G_{\phi}$ from $\Phi$,
we run a test on $\Phi$ to check if it has any non-trivial
automorphisms.  If it does, we discard it.  Otherwise, we construct
the graph $G^1_{\phi}$ and use that.

\paragraph{Unique Satisfiability}
Having generated a random homogeneous formula $\phi$, we wish to check
that it is uniquely satisfiable.  For this, we use a highly developed
SAT solver (\texttt{CryptoMiniSat 5}).  This SAT solver is
specifically optimized for cryptographic applications where the input
often contains clauses that are formed by taking the XOR, rather than
the disjunction, of a set of literals.  \texttt{CryptoMiniSat}
combines standard SAT solving methods (based on DPLL) with the
selective use of Gaussian elimination to attack such problems quickly.

In our filter, we express
$\phi$ as a conjunction of clauses where each clause is the XOR of
three variables.   We then test the satisfiability of $\phi'
\equiv \phi \land \bigvee_{X \in \mathcal{X}} X$.  That is, we add a
clause that is the disjunction of all variables in $\mathcal{X}$.  Of
course, $\phi'$ is satisfiable if, and only if, $\phi$ has a
satisfying assignment other than the all zeroes solution.  In other
words, $\phi'$ is satisfiable if, and only if, $\phi$ is \emph{not}
uniquely satisfiable.

\paragraph{Local Consistency}
We also want to ensure that the $\phi$ we select is $k$-locally
consistent for a sufficiently large value of $k$.  This is difficult
to check directly.  The problem of checking $k$-local consistency is
known to be hard, requiring time exponential in $k$ (see~\cite{Berkholz}) and we do not know of any good
implementations.  Instead, we used a simple heuristic that leverages
the specific capabilities of \texttt{CryptoMiniSat}.  Specifically,
this package allows us to turn the use of Gaussian elimination on and
off with an option.  We check the satisfiability of the formula
$\phi'$ by running \texttt{CryptoMiniSat} twice, once with Gaussian
elimination on and once with it off.  If the former is significantly
faster than the latter, we expect that the $\phi$ we have is a good
candidate.  Note, however, that this does not give us any bounds on
the value of $k$ for which $\phi$ might be $k$-locally consistent.

To justify this heuristic, note that the DPLL methods (with
clause-learning and restarts) as employed in modern SAT solvers are
subsumed by bounded-width resolution (see~\cite{AFT14}).  And formulas
that are highly locally consistent but not globally consistent are
exactly the ones that are difficult for bounded-width
resolution~\cite{Ats04}.  On the other hand, Gaussian elimination is a
method that specifically is fast for solving systems of linear
equations which may well be locally consistent~\cite{ABD09}.  Thus,
a formula on which Gaussian elimination is quick to determine
satisfiability but DPLL-based methods are slow is a prime candidate
for us.

\paragraph{Summary of Methodology}
In summary, our methodology for generating hard examples for
isomorphism testers is the following.  
\begin{enumerate}
\item For a fixed value of $n$ and
$m$ (roughly about $2n$), take a set $\mathcal{X}$ of $n$ variables.
\item Randomly select $m$ $3$-element subsets of $\mathcal{X}$ to form
  the left-$3$-regular bipartite graph $\Phi$.
\item Check (using \texttt{Traces}) to see if $\Phi$ has any
  non-trivial automorphisms.  If so, discard it.
\item If $\Phi$ has no non-trivial automorphisms, form the formula
  $\phi'$ by taking the conjunction of the clauses $\bigoplus_{X \sim
    C} X$ for each left-node $C$ of $\Phi$ along with the clause
  $\bigvee_{X \in \mathcal{X}} X$.  
\item Check if the formula $\phi'$ is satisfiable using the SAT solver
  \texttt{CryptoMiniSat} with Gaussian elimination option on.  If it
  is satisfiable, discard $\Phi$.
\item Run \texttt{CryptoMiniSat} on $\phi'$ a second time, with the
  Gaussian elimination option turned off.  If this takes significantly
  longer to determine $\phi'$ is unsatisfiable, then $\Phi$ is a prime
  candidate for the construction.
\item From $\Phi$, obtain the graph $G^1_{\phi}$ by replacing each
  node $C$ on the left-hand side with four nodes, and each node $X$ in
  the right-hand side with two nodes and connecting them as described
  earlier. 
\end{enumerate}

\section{Experimental Results}\label{sec:results}
We can report on three sets of experimental results, using the
construction described in the previous section.  It should be
noted that the main parameter that can be varied in the construction
is the ratio $m/n$ where $m$ is the number of clauses and $n$ the
number of variables in the $\txor$ formula.  The ratio needs to be at
least $1$ to guarantee that the constructed formula is uniquely
satisfiable.  The closer it is to $1$, the less likely it is to be
uniquely satisfiable.  Indeed, experimental runs show that at smaller
values we had to sample from the distribution $\mathcal{H}(m,n)$ many
times over to find a uniquely satisfiable instance.  On the other
hand, the larger the value of $m/n$, the harder it is to find locally
satisfiable instances.  While the theoretical results guarantee that
the probability of finding such instances increases with $n$, it
clearly does so more slowly for large values of $m/n$.  Hence, in
practice, one needs to fine tune the right value of the ratio to get
good results.

It should also be noted that our construction does not determine the
actual WL-dimension of the graphs.  This seems to be a much harder
computational problem than testing isomorphism itself
(see~\cite{Berkholz} for bounds on the related problem of determining
$k$-local consistency).  Thus, while the heuristic filters we use are
\emph{likely} to produce graphs of large WL-dimension, we are unable
to actually state the dimension of the graphs produced.

\paragraph{First Set.}
The procedure for constructing graphs described in the previous
section was run on a cloud server, with the specification given in
Table~\ref{table:test_env}, during April-June 2017.  The results show
that graphs of a few hundred nodes produced using this procedure are
very difficult for \texttt{Traces} in the sense that in most
cases, at this size, the system timed out (with a timeout set at 3
hours) and failed to identify the automorphism orbits.
\begin{table}[htbp!]
	\caption{Test Environment 1}
	\centering
	\label{table:test_env}
	\begin{tabular}{l l}
		\toprule
		Feature & Description  \\ 
		\midrule
		Host & DigitalOcean \\
		Operating System & Ubuntu 16.10 64-bit \\
		Memory & 2GB \\
		Disk & 20GB SSD \\
		CPU & 2 CPUs \\
		\bottomrule
	\end{tabular}
\end{table}

Some results of test runs on graphs produced by our construction are
shown in Figure~\ref{fig:px}.  These plot the time taken to run
\texttt{Traces} on graphs with $n$ nodes ($n$ being the
horizontal axis).  The plot on the left of the figure is for
graphs produced from $\txor$ formulas with $n$ variables and $m=n$
clauses.  The plot on the top right gives similar times for graphs
produced from  $\txor$ formulas with $m = 2n$.  Here, virtually all
graphs we were able to produce with over 5000 nodes timed out on
\texttt{Traces}.

\begin{figure}[htbp!] 
	\centering
	\begin{minipage}{.4\textwidth}
		\begin{tikzpicture}[trim left=0cm, scale=0.8]
		\begin{axis}[
		xlabel=nodes,
		ylabel=t sec,
		grid=both
		]
		\addplot[
		scatter,only marks,scatter src=explicit symbolic,
		scatter/classes={
			a={mark=o,blue}
		}
		]
		table[x=x,y=y,meta=label]{
			x    y    label
			36	0.000504970550537	a
			42	0.000595092773438	a
			48	0.000430107116699	a
			54	0.000416994094849	a
			60	0.000868082046509	a
			120	0.0010769367218	a
			180	0.00196194648743	a
			240	0.0024151802063	a
			300	0.00288486480713	a
			360	0.00310897827148	a
			420	0.00304412841797	a
			480	0.0377099514008	a
			540	0.822341918945	a
			600	1.99121904373	a
			606	0.287495136261	a
			612	0.0140619277954	a
			618	0.919209003448	a
			624	0.014004945755	a
			630	0.0803589820862	a
			636	12.7223041058	a
			642	12.3259220123	a
			648	249.624564171	a
			654	13.3357129097	a
			660	13.0116429329	a
			666	0.671919822693	a
			672	12.6001150608	a
			678	0.0159549713135	a
			684	1.46877002716	a
			708	286.804507017	a
			714	14.3620049953	a
			720	7.38974690437	a
			726	12.9122009277	a
			792	0.103566884995	a
			834	1.10935306549	a
		};
		\end{axis}
		\end{tikzpicture}
		
	\end{minipage}
	\begin{minipage}{.4\textwidth}
		\centering
		\begin{tikzpicture}[trim left=-1cm, scale=0.8]
		\begin{axis}[
		xlabel=nodes,
		ylabel=t sec,
		grid=both,
		extra y ticks={900},
		extra tick style={grid=major, grid style={dotted, cyan}},
		extra y tick labels={\hspace{5em}$timeout$},
		extra x tick labels={},
		extra y tick style={grid=none},
		ymax=900
		]
		\addplot[
		scatter,only marks,scatter src=explicit symbolic,
		scatter/classes={
			a={mark=o,blue}
		}
		]
		table[x=x,y=y,meta=label]{
			x    y    label
			60	0.00088095664978	a
			70	0.000917196273804	a
			80	0.000850915908813	a
			90	0.000887870788574	a
			100	0.000802040100098	a
			200	0.000893115997314	a
			300	0.00113701820374	a
			400	0.00194406509399	a
			500	0.00256896018982	a
			600	0.00155901908875	a
			700	0.00173497200012	a
			800	0.00396299362183	a
			900	0.00400900840759	a
			1000	0.00220799446106	a
			1100	0.0236101150513	a
			1200	0.00546789169312	a
			1300	0.00626587867737	a
			1400	0.00990104675293	a
			1500	0.0394530296326	a
			1600	0.0112538337708	a
			1700	0.0123739242554	a
			1800	0.0139899253845	a
			1900	0.0278260707855	a
			2000	0.0534319877625	a
			2100	0.00984311103821	a
			2200	0.0178880691528	a
			2300	1.90173387527	a
			2400	2.04471302032	a
			2500	0.200901985168	a
			2600	0.327310085297	a
			2700	0.923086881638	a
			2800	900	a
			2900	2.30818605423	a
			3000	2.48632717133	a
			3100	23.9029829502	a
			3200	22.727894783	a
			3300	23.0080840588	a
			3400	0.441438913345	a
			3500	0.49547791481	a
			3600	577.676494122	a
			3700	900	a
			3800	900	a
			3900	29.6461250782	a
			4000	570.171611071	a
			4100	609.037957907	a
			4200	609.872159004	a
			4300	3.48693990707	a
			4400	637.644945145	a
			4500	900	a
			4600	900	a
			4700	900	a
			4800	900	a
			4900	900	a
			5000	715.732777119	a
			6000	900	a
			7000	900	a
			8000	900	a
			9000	900	a
			10000	900	a
		};
		\end{axis}
		\end{tikzpicture}
	\end{minipage}
	\captionsetup{justification=centering}
	\caption{Left: n=m (con\_n).\\
		Right: n=2m (con\_2n). }
	\label{fig:px}
\end{figure}
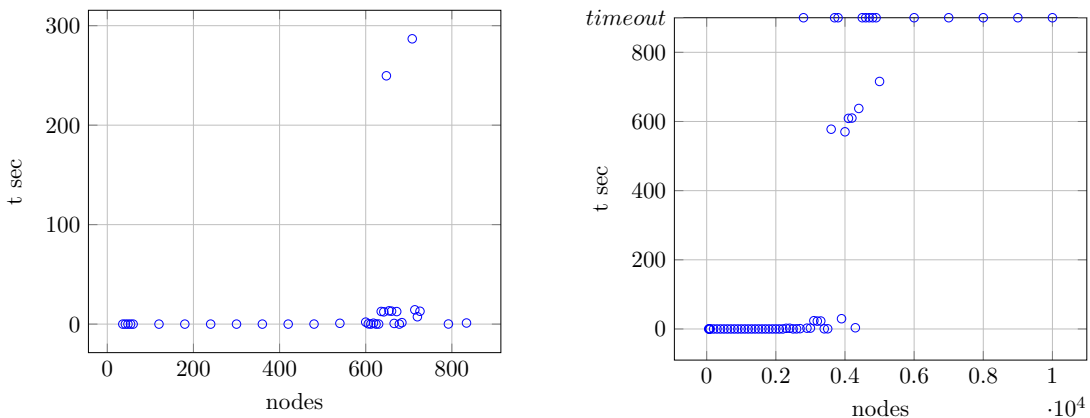

The complete data, including the graphs constructued, from this set of
experiments is available at
\url{https://github.com/kkcam/graph-ismorphism}.  Some explanation of
the nomenclature might be helpful.  The graphs are classified
according to the ratio $m/n$ used in their construction.  For
instance \texttt{con\_2n} is the collection of graphs with $m=2n$, and \texttt{con\_n}
is the collection of graphs with $m=n$.  There is also a package
\texttt{con\_sml} which contains for each $n$ the graph with the
smallest ratio $m/n$ for which we were able to obtain a uniquely
satisfiable $\txor$ formula, which also gives an asymmetric bipartite
graph $\Phi$.

\paragraph{Second Set}

 In February 2019, we ran a second set of
experiments.  These used the same database of graphs produced by the
construction for the first set.  This time the virtual machine setup
was as described in Table~\ref{table:test_env2}.  Again, with
\texttt{Traces}, most of the larger graphs timed out.  However, we
also ran the same set of graphs through \texttt{nauty}, \texttt{bliss}
and \texttt{conauto}, and all of these showed much better performance than
\texttt{Traces} on the large graphs in this collection.

\begin{table}[htbp!]
	\caption{Test Environment 2}
	\centering
	\label{table:test_env2}
	\begin{tabular}{l l}
		\toprule
		Feature & Description  \\ 
		\midrule
		Host & B2s Azure VM\\
		Operating System & Ubuntu 18.04 LTS \\
		Memory & 8GB \\
		Disk & 20GB SSD \\
		CPU & 2 CPUs \\
		\bottomrule
	\end{tabular}
\end{table}

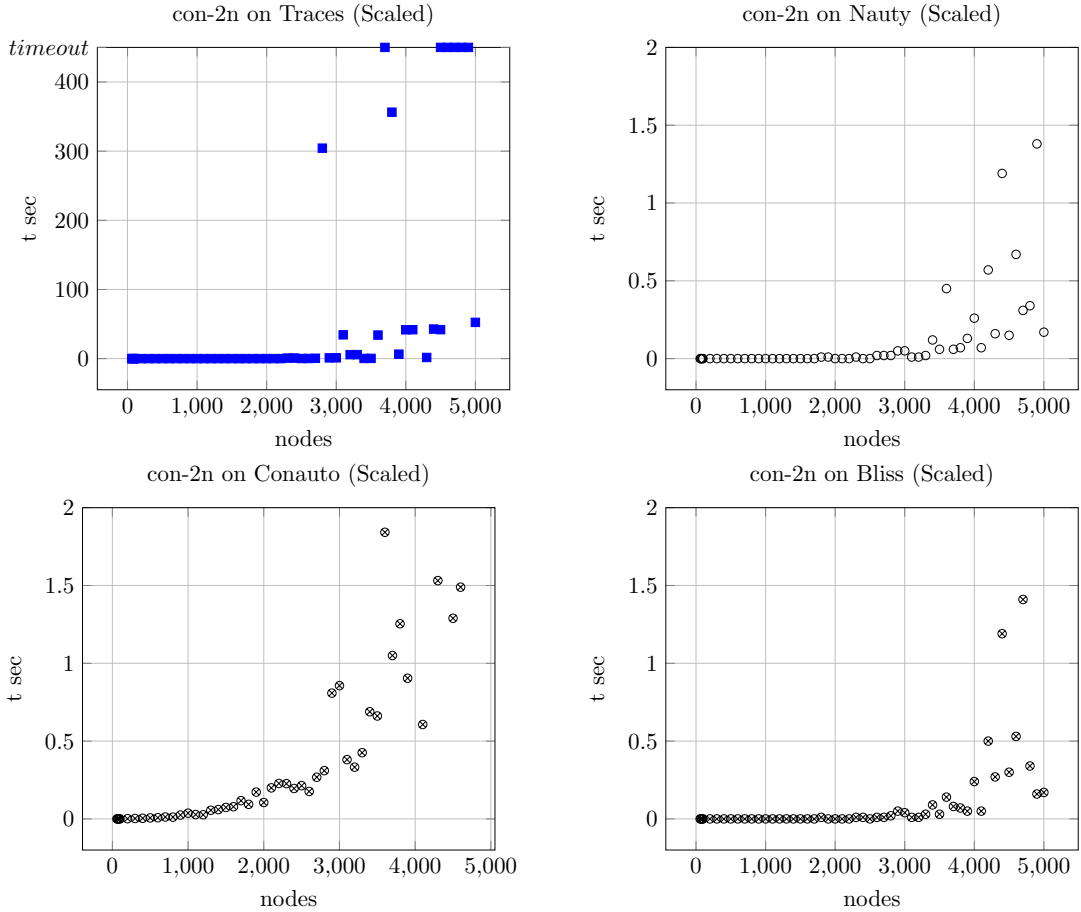
\begin{figure}[htb!] 
	\centering
        \begin{minipage}{.45\textwidth}
	\centering
	\begin{tikzpicture}[trim left=0cm, scale=0.8]
	\begin{axis}[
		xlabel=nodes,
		ylabel=t sec,
		legend pos=north west,
		grid=both,
		extra y ticks={450},
		extra tick style={
			grid=major,
			grid style={
				dotted,
				cyan
			}
		},
		extra y tick labels={$timeout$},
		extra x tick labels={},
		y label style={
			at={(axis description cs:0.05,.5)},
			anchor=south},
		extra y tick style={grid=none},
		ymax=450,
		title=con-2n on Traces (Scaled)
	]
	\addplot[
		scatter,
		only marks,
		scatter src=explicit symbolic,
		scatter/classes={
			traces={mark=square*,blue},
			conauto={mark=triangle*,red},
			nauty={mark=o,draw=black,fill=black},
			bliss={mark=otimes,draw=black,fill=black}
		}
	]
	table[x=x,y=y,meta=label]
	{
		x    y    label
60	0	traces
70	0	traces
80	0	traces
90	0	traces
100	0	traces
200	0	traces
300	0	traces
400	0	traces
500	0	traces
600	0	traces
700	0	traces
800	0	traces
900	0	traces
1000	0	traces
1100	0.01	traces
1200	0	traces
1300	0	traces
1400	0	traces
1500	0.02	traces
1600	0	traces
1700	0	traces
1800	0.01	traces
1900	0.01	traces
2000	0.04	traces
2100	0.01	traces
2200	0.01	traces
2300	0.78	traces
2400	0.83	traces
2500	0.14	traces
2600	0.21	traces
2700	0.51	traces
2800	304.28	traces
2900	1.08	traces
3000	1.18	traces
3100	34.41	traces
3200	5.74	traces
3300	5.76	traces
3400	0.29	traces
3500	0.34	traces
3600	34.05	traces
3700	450	traces
3800	356.32	traces
3900	6.42	traces
4000	41.68	traces
4100	41.82	traces
4500	41.89	traces
4300	1.67	traces
4400	42.73	traces
4500	450	traces
4600	450	traces
4700	450	traces
4800	450	traces
4900	450	traces
5000	52.42	traces
	};
	\end{axis}
	\end{tikzpicture}
        \end{minipage} 
        \begin{minipage}{.45\textwidth}
	\centering
	\begin{tikzpicture}[trim left=0cm, scale=0.8]
	\begin{axis}[
	xlabel=nodes,
	ylabel=t sec,
	legend pos=north west,
	grid=both,
	extra tick style={
		grid=major,
		grid style={
			dotted,
			cyan
		}
	},
	extra x tick labels={},
	y label style={
		at={(axis description cs:0.05,.5)},
		anchor=south},
	extra y tick style={grid=none},
	ymax=2,
	title=con-2n on Nauty (Scaled)
	]
	\addplot[
	scatter,
	only marks,
	scatter src=explicit symbolic,
	scatter/classes={
		traces={mark=square*,blue},
		conauto={mark=triangle*,red},
		nauty={mark=o,draw=black,fill=black},
		bliss={mark=otimes,draw=black,fill=black}
	}
	]
	table[x=x,y=y,meta=label]
	{
		x    y    label
60	0	nauty
70	0	nauty
80	0	nauty
90	0	nauty
100	0	nauty
200	0	nauty
300	0	nauty
400	0	nauty
500	0	nauty
600	0	nauty
700	0	nauty
800	0	nauty
900	0	nauty
1000	0	nauty
1100	0	nauty
1200	0	nauty
1300	0	nauty
1400	0	nauty
1500	0	nauty
1600	0	nauty
1700	0	nauty
1800	0.01	nauty
1900	0.01	nauty
2000	0	nauty
2100	0	nauty
2200	0	nauty
2300	0.01	nauty
2400	0	nauty
2500	0	nauty
2600	0.02	nauty
2700	0.02	nauty
2800	0.02	nauty
2900	0.05	nauty
3000	0.05	nauty
3100	0.01	nauty
3200	0.01	nauty
3300	0.02	nauty
3400	0.12	nauty
3500	0.06	nauty
3600	0.45	nauty
3700	0.06	nauty
3800	0.07	nauty
3900	0.13	nauty
4000	0.26	nauty
4100	0.07	nauty
4200	0.57	nauty
4300	0.16	nauty
4400	1.19	nauty
4500	0.15	nauty
4600	0.67	nauty
4700	0.31	nauty
4800	0.34	nauty
4900	1.38	nauty
5000	0.17	nauty
	};
	\end{axis}
	\end{tikzpicture}
        \end{minipage}

        \begin{minipage}{.45\textwidth}
	\centering
	\begin{tikzpicture}[trim left=0cm, scale=0.8]
	\begin{axis}[
	xlabel=nodes,
	ylabel=t sec,
	legend pos=north west,
	grid=both,
	extra tick style={
		grid=major,
		grid style={
			dotted,
			cyan
		}
	},
	extra x tick labels={},
	y label style={
		at={(axis description cs:0.05,.5)},
		anchor=south},
	extra y tick style={grid=none},
	ymax=2,
	title=con-2n on Conauto (Scaled)
	]
	\addplot[
	scatter,
	only marks,
	scatter src=explicit symbolic,
	scatter/classes={
		traces={mark=square*,blue},
		conauto={mark=triangle*,red},
		nauty={mark=o,draw=black,fill=black},
		conauto={mark=otimes,draw=black,fill=black}
	}
	]
	table[x=x,y=y,meta=label]
	{
		x    y    label
		60	0.000173	conauto
		70	0.000305	conauto
		80	0.000225	conauto
		90	0.000351	conauto
		100	0.000307	conauto
		200	0.00157	conauto
		300	0.002982	conauto
		400	0.004113	conauto
		500	0.006676	conauto
		600	0.007868	conauto
		700	0.012468	conauto
		800	0.011749	conauto
		900	0.025174	conauto
		1000	0.037079	conauto
		1100	0.028312	conauto
		1200	0.026808	conauto
		1300	0.055838	conauto
		1400	0.059993	conauto
		1500	0.073944	conauto
		1600	0.078582	conauto
		1700	0.117357	conauto
		1800	0.093859	conauto
		1900	0.172366	conauto
		2000	0.105528	conauto
		2100	0.199629	conauto
		2200	0.227976	conauto
		2300	0.226788	conauto
		2400	0.195812	conauto
		2500	0.213442	conauto
		2600	0.176564	conauto
		2700	0.267981	conauto
		2800	0.310169	conauto
		2900	0.808437	conauto
		3000	0.856482	conauto
		3100	0.381118	conauto
		3200	0.332706	conauto
		3300	0.425531	conauto
		3400	0.688712	conauto
		3500	0.661475	conauto
		3600	1.841973	conauto
		3700	1.05006	conauto
		3800	1.254185	conauto
		3900	0.904109	conauto
		4000	2.528401	conauto
		4100	0.606763	conauto
		4200	2.212026	conauto
		4300	1.53134	conauto
		4400	5.458036	conauto
		4500	1.28925	conauto
		4600	1.489192	conauto
		4700	68.823918	conauto
		4800	14.367406	conauto
		4900	4.261737	conauto
		5000	2.43648	conauto
	};
	\end{axis}
	\end{tikzpicture}
        \end{minipage}
        \begin{minipage}{.45\textwidth}
	\centering
	\begin{tikzpicture}[trim left=0cm, scale=0.8]
	\begin{axis}[
	xlabel=nodes,
	ylabel=t sec,
	legend pos=north west,
	grid=both,
	extra tick style={
		grid=major,
		grid style={
			dotted,
			cyan
		}
	},
	extra x tick labels={},
	y label style={
		at={(axis description cs:0.05,.5)},
		anchor=south},
	extra y tick style={grid=none},
	ymax=2,
	title=con-2n on Bliss (Scaled)
	]
	\addplot[
	scatter,
	only marks,
	scatter src=explicit symbolic,
	scatter/classes={
		traces={mark=square*,blue},
		conauto={mark=triangle*,red},
		nauty={mark=o,draw=black,fill=black},
		bliss={mark=otimes,draw=black,fill=black}
	}
	]
	table[x=x,y=y,meta=label]
	{
		x    y    label
		60	0	bliss
		70	0	bliss
		80	0	bliss
		90	0	bliss
		100	0	bliss
		200	0	bliss
		300	0	bliss
		400	0	bliss
		500	0	bliss
		600	0	bliss
		700	0	bliss
		800	0	bliss
		900	0	bliss
		1000	0	bliss
		1100	0	bliss
		1200	0	bliss
		1300	0	bliss
		1400	0	bliss
		1500	0	bliss
		1600	0	bliss
		1700	0	bliss
		1800	0.01	bliss
		1900	0	bliss
		2000	0	bliss
		2100	0	bliss
		2200	0	bliss
		2300	0.01	bliss
		2400	0.01	bliss
		2500	0	bliss
		2600	0.01	bliss
		2700	0.01	bliss
		2800	0.02	bliss
		2900	0.05	bliss
		3000	0.04	bliss
		3100	0.01	bliss
		3200	0.01	bliss
		3300	0.03	bliss
		3400	0.09	bliss
		3500	0.03	bliss
		3600	0.14	bliss
		3700	0.08	bliss
		3800	0.07	bliss
		3900	0.05	bliss
		4000	0.24	bliss
		4100	0.05	bliss
		4200	0.5	bliss
		4300	0.27	bliss
		4400	1.19	bliss
		4500	0.3	bliss
		4600	0.53	bliss
		4700	1.41	bliss
		4800	0.34	bliss
		4900	0.16	bliss
		5000	0.17	bliss
	};
	\end{axis}
	\end{tikzpicture}        
        \end{minipage}
	\captionsetup{justification=centering}
	\caption{Test run of first set of graphs on four different
          isomorphism solvers.}
	\label{fig:second}
\end{figure}

As a sample, we produce in Figure~\ref{fig:second} the timing results on the graphs in the
package \texttt{con\_2n} for each of the four isomorphism solvers.
The timeout is set at 120 m inutes and which can be seen
to occur frequently for \texttt{Traces}.  It should be noted that most
of the timeouts occurred due to memory limitations.  It seems
\texttt{Traces} requires large amounts of memory to process these
graphs and the swapping required is what leads to the process timing
out.  While the other solvers (in particular \texttt{bliss}) were able
to resolve the graphs quickly, they do show fast growth in
run times as the graphs get larger.  This is explored further in the
third set of experiments.

\begin{figure}[htb!]
  \begin{center}
    \includegraphics[width=0.8\linewidth]{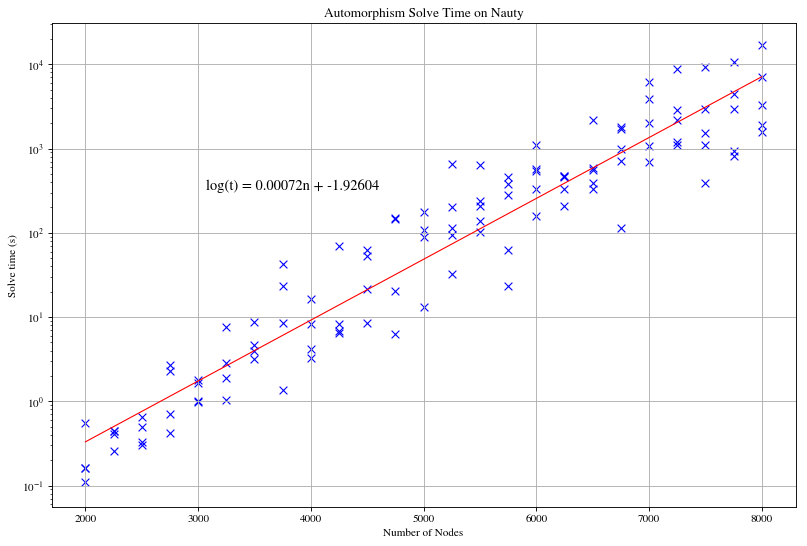}
  \end{center}

  \begin{center}
    \includegraphics[width=0.8\linewidth]{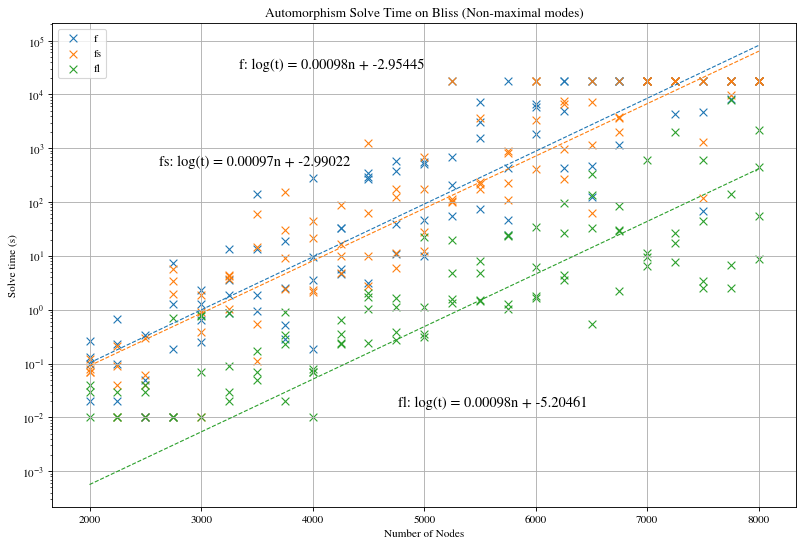}
  \end{center}
\caption{running times on \texttt{nauty} and \texttt{bliss}} \label{fig:yeung}
\end{figure}

\paragraph{Third Set}

We can report on a third set of experiments
performed by Richie Yeung in January-March 2019.  The full data on this can be found at
\url{https://github.com/y-richie-y/sat_cfi/}.  This generated a new
collection of graphs using the same protocol as described in
Section~\ref{sec:setup}.  These were run again through
\texttt{Traces}, \texttt{nauty}, \texttt{bliss} and \texttt{conauto}.
Graphs with up to 8000 nodes were generated with values of $m/n$ in
the range of $1.5$-$2$.  Once again, \texttt{Traces} frequently
(almost invariably) times out on the larger graphs.  The performance
of the other three solvers is better, but exhibits fast growth in
running time.  For example we exhibit the results for \texttt{nauty} and
\texttt{bliss} in Figure~\ref{fig:yeung}, with a
logarithmic scale on the $y$-axis for running time.  This is highly
indicative of exponential growth in running time.  Once again,
\texttt{bliss} proved to be the fastest of the solvers.  However, the
performance depends heavily on which target cell selection heuristic
is used.  As \texttt{bliss} allows the use of different heuristics by
setting parameters, results three different heuristics are displayed in
Figure~\ref{fig:yeung}, in different colours.  The best
performance is for \texttt{fl}, which is ``first largest non-singleton
cell''. 

\paragraph{Discussion}
There are some important points one should highlight from the
experimental results.  The first is that \texttt{Traces} performs
significantly worse on the graphs we construct than any of the other
solvers.  One possible explanation for this is the fact that the
fundamental algorithm in \texttt{Traces} is a breadth-first search
procedure of the individualization tree.  Such a procedure may require
shallowe trees but may, in principle, be more memory-intensive than a depth-first search.  An
important way that \texttt{Traces} avoids this drawback is the early
identification of symmetries in the graph and using this to prune the
search space.  It is possible that the lack of symmetries in our
graphs makes this pruning impossible leading to the solver running out
of memory and timing out as a result.  The construction described by
Neuen and Schweitzer~\cite{NeuenS-ESA} is also aimed at constructing
graphs which are asymmetric and have high WL-dimension.  They also,
similarly, report that \texttt{Traces} is rather slower on their
graphs than other solvers.  In contrast, Yeung reports that his
implementation of the Neuen-Schweitzer construction yields graphs
on which \texttt{Traces} performs faster than \texttt{nauty}.  This
warrants further investigation.

Apart from \texttt{Traces}, an important distinction between the other
solvers tested is their cell selection strategy.  One of them,
\texttt{bliss}, explicitly allows the user to choose the strategy in a
call to the system.  As we have seen, the choice of strategy can have
a significant effect on the performance of the solvers.  The results
of the third set of experiments, especially on \texttt{bliss}, demonstrate the effect starkly.  It
would be instructive to understand how these cell selection strategies
interact with the construction we have presented.

Our theoretical construction shows that there exists a family of
graphs on which any solver based on individualization and refinement,
along with factoring by symmetries, will take exponential time, no
matter what cell selection strategy is used.  Furthermore, it shows
that sampling graphs at random from the distribution we describe will
produce such graphs with high probability.  We cannot verify that the
graphs we select do indeed have high WL-dimension, which is why we
need experimental validation, and the results do strongly suggest that
the growth rate, on any solver, is exponential.  In the first set of
experiments, we constructed graphs with parameter $m/n \leq 2$ only up
to about 5000 nodes.  For larger graphs, larger ratios were used.  The
third set of experimental results extended the construction of graphs
with small ratio up to about 8000 nodes (e.g.\ $n=800$, $m=1600$),
and the increase in running time is striking.  The main reason for
using larger ratios to generate the larger graphs in the first case
was that at small ratios, finding large $\txor$ formulas that are
uniquely satisfiable becomes difficult, requiring large numbers of
re-trials with fresh sampling.  When this is combining with two runs
of a SAT solver for each sampled formula, the time becomes
prohibitive.  Also, as we are not using the asymmetry gadgets
described in Section~\ref{sec:construction}, we are relying on
checking that the random 3-left-regular hypergraph we select is itself
asymmetric, and this may also involve repeated trials.  Here the
probability of success decreases with $n$ for a constant ratio.
The protocol was improved in the third set by checking
unique satisfiability by a direct rank computation.  Then, the SAT
solver check was only performed for those formulas already known to be
uniquely satisfiable, merely to record the difference caused by the
use of Gaussian elimination.

\section{Conclusion}
We have described a theoretical construction of graphs that are provably
difficult for a isomorphism solvers such as \texttt{nauty} and
\texttt{Traces}.  We have examined the construction experimentally and
the results indicate that the graphs produced do indeed show
exponential growth rates in running time on these solvers.

The main theoretical result combines known lower bounds for local
consistency of $\txor$ formulas with a construction inspired by the
graphs of Cai-F{\"u}rer-Immerman and the related multipede
construction to give a family of graphs which are provably asymmetric
and of linear Weisfeiler-Leman dimension.  This ensures that the
running time grows exponentially with the size of the graphs.  Our
result also shows that a randomly constructed $\txor$ instance is
likely to yield a difficult graph with high probability.  That is, the
probability tends towards $1$ as the graphs get larger.

The experimental setup uses SAT solver technology to create a series
of filters which, combined with the random generation of $\txor$ formulas
produces graphs to follow the theoretical procedure.  For the
experimental set-up, we dropped some of the theoretical guarantees on
asymmetry and local consistency and replaced them with heuristic
tests.  This is because we are unable to verify directly the
WL-dimension of the graphs constructed.  

The results show that our method quickly and consistently produces
graphs that are difficult for \texttt{Traces}.  Experiments with other
solvers also support the conclusion that the growth rate of the time
taken is exponential.  This is comparable with the construction of
hard graphs in~\cite{NeuenS-ESA}.

\bibliographystyle{plainurl}
\bibliography{references}

\end{document}